\newcommand{\ie}{{\it i.e.,~}}
\newcommand{\eg}{{\it e.g.,~}}
\newcommand{\etal}{{\it et al.}}
\newcommand{\LL}{\emph{Look}}
\newcommand{\CC}{\emph{Compute}}
\newcommand{\MM}{\emph{Move}}
\newcommand{\FSYNC}{\mathsf{FSYNC}}
\newcommand{\isMulti}{\mathsf{isMulti}}
\newcommand{\opset}{\mathsf{OPSET}}
\newcommand{\TRUE}{\mathsf{TRUE}}
\newcommand{\FALSE}{\mathsf{FALSE}}
\title{Gathering Despite Defected View} 
\titlerunning{Gathering Despite Defected View} 
\author{Yonghwan Kim}
{Nagoya Institute of Technology, Aichi, Japan}
{kim@nitech.ac.jp}{https://orcid.org/0000-0002-5437-7626}{}
\author{Masahiro Shibata}
{Kyushu Institute of Technology, Fukuoka, Japan}
{shibata@csn.kyutech.ac.jp}{https://orcid.org/0000-0003-1414-8033}{}
\author{Yuichi Sudo}
{Hosei University, Tokyo, Japan}
{sudo@hosei.ac.jp}{https://orcid.org/0000-0002-4442-1750}{}
\author{Junya Nakamura}
{Toyohashi University of Technology, Aichi, Japan}
{junya@imc.tut.ac.jp}{https://orcid.org/0000-0002-1363-4358}{}
\author{Yoshiaki Katayama}
{Nagoya Institute of Technology, Aichi, Japan}
{katayama@nitech.ac.jp}{https://orcid.org/0000-0003-1683-2154}{}
\author{Toshimitsu Masuzawa}
{Osaka University, Osaka, Japan}
{masuzawa@ist.osaka-u.ac.jp}{https://orcid.org/0000-0003-4628-6393}{}
\authorrunning{Y. Kim et al.} 
\keywords{mobile robot, gathering, defected view model} 
\begin{document}

\maketitle

\begin{abstract}
An autonomous mobile robot system consisting of many mobile computational entities (called \emph{robots}) 
attracts much attention of researchers, and
to clarify the relation between the capabilities of robots and solvability of the problems 
is an emerging issue for a recent couple of decades. 

Generally, each robot can observe all other robots 
as long as there are no restrictions for visibility range or obstructions,
regardless of the number of robots.
In this paper, we provide a new perspective on the observation by robots;
a robot cannot necessarily observe all other robots 
regardless of distances to them.
We call this new computational model \emph{defected view model}.
Under this model, in this paper, we consider the \emph{gathering} problem that requires all the robots to gather 
at the same point and propose two algorithms to solve the gathering problem
in the adversarial ($N$,$N-2$)-defected model for $N \geq 5$ 
(where each robot observes at most $N-2$ robots chosen adversarially) 
and the distance-based (4,2)-defected model 
(where each robot observes at most 2 closest robots to itself) respectively, 
where $N$ is the number of robots.
Moreover, we present an impossibility result showing that
there is no (deterministic) gathering algorithm 
in the adversarial or distance-based (3,1)-defected model.
Moreover, we show an impossibility result for the gathering in a relaxed ($N$, $N-2$)-defected model.
\end{abstract}

\section{Introduction}
\label{sec:intro}
An autonomous mobile robot system, which is firstly introduced in \cite{SY99}, 
is a distributed system which consists of many mobile computational entities 
(called \emph{robots}) with limited capabilities,
\eg robots do not have identifiers, cannot distinguish other robots, 
or cannot remember their any past actions.
The robots operate autonomously and cooperatively;
each robot observes the other robots (called \LL), 
computes the destination based on the observation result (called \CC), 
and moves to the destination point (called \MM).
Each robot autonomously and cyclically performs the above three operations, \LL, \CC, and \MM, 
to achieve the given common goal.
The literature \cite{gatherimpos,perspec13,distalg06,SY99} provide a formal discussion on the capabilities of the robots 
for the distributed coordination (\eg gathering, scattering, or pattern formation).
Since the introduction of the autonomous mobile robot system, 
many related studies about the computational power of the system have been introduced 
and many researchers are interested in clarifying the relationship between the capabilities of the robots 
and solvability of the problems.

Generally, in \LL~operation, 
each robot can observe all other robots (within its visibility range if it has a limited visibility range)
to compute the destination point to move.
In other words, each robot takes a snapshot that consists of all other robots' (relative) positions in its \LL~operation.
This implies that each robot temporarily remembers the positions of up to 
$N-1$ robots, 
where $N$ is the total number of robots.
From the more practical viewpoint, 
we claim that a robot with low functionality may not have such large working memory.
That is, the main question we address in this paper is ``\emph{what occurs if a robot cannot observe some of other robots?}''.
More precisely, ``\emph{how many other robots should be observed to achieve the goals of the problems?}''.

\vspace{5pt}
\noindent
{\bf Related works:}
The gathering problem \cite{gatherfea},
which requires all the robots move to a common (non-predetermined) position,
is one of the basic and essential problems in autonomous mobile robot systems.
There are many studies about the gathering of the autonomous mobile robots;
M. Cieliebak \etal~presented the first algorithm to achieve the gathering from any arbitrary configuration \cite{gatheranyconf},
R. Klasing \etal~studied the gathering of mobile robots in one node of an anonymous unoriented ring \cite{gatherring},
G. D'Angelo \etal~introduced a gathering algorithm by robots without multiplicity detection on grids and trees \cite{gathergrid},
and many works for the gathering by robots with dynamic (or inaccurate) compass are also introduced 
\cite{gatherinac3,gatherinac2,gatherinac4,gatherinac1}.
The capability of the robots deeply affects the solvability of the gathering problem, thus 
some investigations about the required capability or impossibility are introduced \cite{gatherfea,gatherimpos}.
However, all of these works assume that each robot can observe all other robots 
within its visibility range if there is no obstruction between the robots \eg any opaque robot.

The works most related to this paper is those with the limited visibility range \cite{gathernulimited,gatherlimited,convlimited,circlelimited}.
The robots with the limited visibility cannot necessarily observe all robots, which is  similar to the defected view model we introduce.  But visibility is limited by distance in the model and thus all robots can be observed when they gather closely enough.
On the other hand, our model cannot guarantee such a full view of the robots.

The works for fault-tolerance \cite{gathermulfault,gatherfault,gatherweakmulfault} are also closely related to this paper.
The defected view in our model can be considered as a new type of faults in autonomous mobile robot systems.

\vspace{5pt}
\noindent
{\bf Contribution:}
To provide some answers for the above research questions, 
we propose a new computational model 
with restriction on the number of the robots that each robot can observe,
named the \emph{defected view model}, 
where each robot observes only $k$ other robots for $1 \leq k < N-1$.
The assumption of this model naturally arises
by considering some issues for robots with low functionality 
such as 
(1) each robot does not have enough working memory to store the entire observation result,
(2) each robot may miss some of observation results due to memory failure, 
or (3) each robot fails to observe some of other robots by sensing failure.
It is obvious that when $k$ becomes the lower, the problem becomes the harder
(possibly impossible) to solve.
We consider two different defected view models 
regarding which $k$ robots are observed:
the distance-based ($N$,$k$)-defected model 
and the adversarial ($N$,$k$)-defected model.
In the former model, each robot observes the $k$ closest robots to its current position,
and in the latter one, $k$ robots which are observed by each robot are determined adversarially.

More precisely, the $k$ robots that each robot can observe are chosen from the robots located at points different from $r$'s current position.
Concerning $r$'s current position, $r$ can detect only whether another robot exists at the point or not (so called the \emph{weak multiplicity detection}).
Such an assumption that the robots at $r$'s current position are excluded from the candidates of the observed $k$ robots is motivated from the following observation: 
each robot $r$ observes the robots at remote points and those at $r$'s current position by different ways usually. 
Each robot observes the remote robots by, for example, a radar sensor or a vision senor, 
but senses the other robots at the same point by, for example, a contact sensor.

As the first step of the gathering in the defected view model, 
we investigate only the case of $k=N-2$. 
The main contributions of this paper is as follows:
(1) we propose a gathering algorithm in the adversarial ($N$,$N-2$)-defected model for any $N \geq 5$,
(2) we present another algorithm to solve the gathering problem in the distance-based (4,2)-defected model,
and (3) we provide the impossibility result showing that there is no (deterministic) algorithm to solve the gathering problem 
in the adversarial or distance-based 
(3,1)-defected model.
Moreover, we present another impossibility result 
in a naturally relaxed ($N$,$N-2$)-defected model where the observed $k$
robots can contain the robots at the observer's current position.
This impossibility result shows the necessity of the assumption that 
the observed $N-2$ robots should be chosen from robots other than
those located at the observer's current position.

The rest of this paper is organized as follows:
Section \ref{sec:model} presents the system model (two defected view models) and problem definition;
Section \ref{sec:nn-alg} introduces an algorithm to solve the gathering problem in the adversarial ($N$, $N-2$)-defected model 
for any $N \geq 5$;
Section \ref{sec:42alg} gives a gathering algorithm in the distance-based (4,2)-defected model;
Section \ref{sec:impos} shows two impossibility results
showing that there is no algorithm in adversarial or
distance-based (3,1)-defected model 
and the relaxed adversarial ($N$,$N-2$)-defected model;
and Section \ref{sec:conclusion} concludes the paper and provides some open problems.

\section{Model and Problem Definition}
\label{sec:model}
\subsection{Robots}
Let $R=\{r_1, r_2, ..., r_N\}$ be the set of $N$ autonomous mobile robots deployed in a plane.
Robots are indistinguishable by their appearance (\ie identical), 
execute the same algorithm (\ie uniform or homogeneous), 
and have no memory (\ie oblivious). 
There is no geometrical agreement; 
robots do not agree on any axis, the unit distance, nor chirality.

A point in the plane is called an \emph{occupied point} if there exists a robot at the point. 
We allow two or more robots to occupy the same point at the same time.
We call a robot a \emph{single robot} if the point occupied by the robot has no other robot.  Otherwise, we call it an \emph{accompanied robot}.

Each robot cyclically performs the three operations, \LL, \CC, and \MM:
{\bf (\LL)} 
a robot obtains the positions (based on its local coordinate system centered on itself) 
of all other (visible) robots, 
{\bf (\CC)} 
a robot determines the destination according to the given algorithm based on the result of \LL~operation. 
Since each robot has no memory, the result of \CC~is determined only by the result of \LL~operation, 
and 
{\bf (\MM)}
a robot moves to the destination computed in \CC~operation. 
We assume \emph{rigid movement} which ensures 
each robot can reach the destination during its \MM~operation, \ie a robot never stops before it reaches its destination.

\subsection{Schedule and Configuration}
We assume a fully-synchronous scheduler ($\FSYNC$): 
all robots fully-synchronously perform their operations (\LL, \CC, and \MM).
This means that all robots perform the same operation at the same time instant and duration.
We call the time duration 
in which all robots perform the three operations (\LL, \CC, and \MM) once 
\emph{a round}.

Let \emph{configuration} $C_t$ be the set of the (global) coordinates of all robots at a given time $t$: 
$C_t = \{(r_{1.x}^t,r_{1.y}^t), (r_{2.x}^t,r_{2.y}^t) \ldots$ $(r_{N.x}^t,r_{N.y}^t)\}$, 
where $r_{i.x}^t$ (resp. $r_{i.y}^t$) is the $X$-coordinate
(resp. $Y$-coordinate) of robot $r_i$ at time $t$.
Note that no robot knows its global coordinate.
Configuration $C_t$ is changed into another configuration $C_{t+1}$ 
after one round (\ie all robots execute the three operations once).

\subsection{Observation: Visibility Range and Multiplicity Detection}
We basically assume that every robot has unlimited visibility range,
\ie any two robots can observe each other regardless of their distance, while we introduce in Definition \ref{def:model} the defect in the information obtained by \LL\ operation.
Moreover, we assume a \emph{weak multiplicity detection},
\ie each robot cannot get the exact number of robots occupying the same point but can distinguish whether the point is occupied by one robot or by multiple robots.
This implies that when each robot observes any point,
it can distinguish the three cases: there is \emph{no robot}, \emph{one robot}, or \emph{two or more robots} at the point.

We consider a \emph{defected view} such that each robot may not observe all other robots.
We define the ($N$,$k$)-defected model, where $1 \leq k < N$ as follows:

\begin{definition}
\label{def:model}
{\bf (($N$,$k$)-defected model)} 
Each robot $r$ can get from \LL\ operation the set of occupied points 
(in its coordinate system) where $k$ robots not accompanied with $r$ are located
(\ie the $k$ robots contains no robot located at $r$'s current point). 
When the number of robots not accompanied with $r$ is less than $k$, 
all such robots are observed. 
The weak multiplicity detection concerning the $k$ robots is assumed: 
a point occupied by only one of the $k$ robots can be distinguished from that occupied by two or more of the $k$ robots.
Moreover, $r$ can distinguish whether $r$ is single or accompanied.
\end{definition}

Note that the ($N$,$N-1$)-defected model is equivalent to the commonly used model (with the weak multiplicity detection) where each robot can observe all robots.
The ($N$,$k$)-defected model has options depending on how the observed $k$ robots are chosen.
We consider the two options in this paper,
named \emph{adversarial ($N$,$k$)-defected model} and \emph{distance-based ($N$,$k$)-defected model}.
In the \emph{adversarial ($N$,$k$)-defected model}, 
$k$ robots observed by each robot are determined adversarially.
In the \emph{distance-based ($N$,$k$)-defected model}, 
each robot $r$ observes the $k$ closest robots to the $r$'s current point.
Tie breaks among the robots the same distance apart is determined in an arbitrary way.

\begin{figure}[b]
	\begin{center}
	    \includegraphics[scale=0.8]{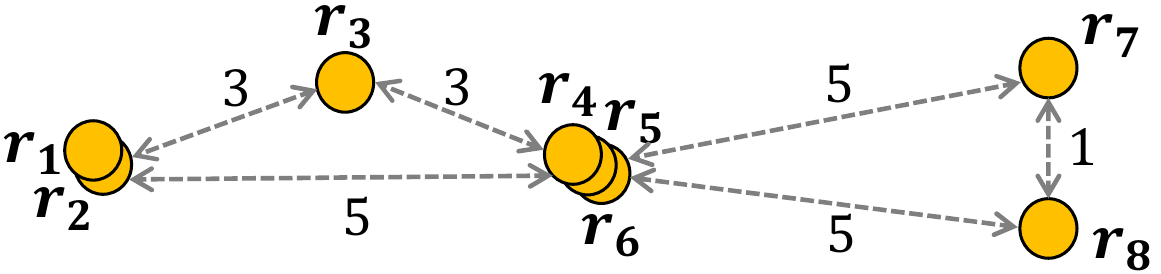}
	\end{center}
    \caption{An example configuration by 8 robots}
    \label{fig:exconf}
\end{figure}

To help to understand, we explain the model using examples.
Figure \ref{fig:exconf} illustrates an example configuration by 8 robots; $R = \{r_1, r_2, \ldots, r_8\}$.
Robots $r_1$ and $r_2$ (resp. $r_4$, $r_5$ and $r_6$) are accompanied, and the other robots are single.
The dotted arrow between robots represents the distance between the points occupied by the robots.
Let $p_i$ denote the point occupied by robot $r_i$.
Now we explain the models as follows:

\begin{itemize}
    \item {\bf The adversarial (8,4)-defected model}. In this model, each robot observes 4 other robots chosen adversarially.
    Assume that robot $r_3$ observes 4 robots, $r_1$, $r_2$, $r_6$, and $r_8$. In this case, robot $r_3$ gets a set of points 
    $P^{r_3} = \{p_1^*, p_3, p_6, p_8 \}$ including point $p_3$ occupied by $r_3$ itself, 
    where $p_i^*$ denotes that $p_i$ is recognized to be occupied by two or more robots.
    Robot $r_3$ knows that two or more robots exist at point $p_1$ because both robots $r_1$ and $r_2$ are chosen, 
    however, $r_3$ does not know that there is another robot at $p_6$ because it observes only $r_6$ at $p_6$.
    Robot $r_4$ (or $r_5$, $r_6$) observes 4 robots among 5 robots, $r_1$, $r_2$, $r_3$, $r_7$, and $r_8$.
    If robots $r_1$, $r_3$, $r_7$, and $r_8$ are chosen, $P^{r_4} = \{p_1, p_3, p_4^*, p_7, p_8\}$ holds, which means that
    $r_4$ observes all points, however, it does not know that another robot exists at $p_1$.
    Robot $r_4$ can know that point $p_4$ is occupied by another robot other than itself.
    If robot $r_4$ observes robots $r_1$, $r_2$, $r_7$, and $r_8$, 
    $P^{r_4} = \{p_1^*, p_4^*, p_7, p_8\}$ holds, which means that
    robot $r_4$ knows there exist two or more robots at $p_1$, but it cannot observe point $p_3$ occupied by robot $r_3$.
    Notice that $r_5$ and $r_6$ located at $p_4$ are allowed to observe the set of points different from those observed by $r_4$.  
    \item {\bf The distance-based (8,3)-defected model}. In this model, each robot observes 3 closest robots to itself.
    Robot $r_7$ observes 3 robots, $r_8$ (the closest one) and two robots among three robots at point $p_4$, 
    thus $P^{r_7} = \{p_4^*, p_7, p_8\}$ always holds.
    Robot $r_4$ observes robot $r_3$ (the closest one)
    and two robots among 4 robots, $r_1$, $r_2$, $r_7$, and $r_8$, which are the same distance apart.
    Note that the observed robots are determined in an arbitrary way, 
    thus in this case, $P^{r_4}$ becomes one among $\{p_1^*, p_3, p_4^*\}$, $\{p_1, p_3, p_4^*, p_7\}$, $\{p_1, p_3, p_4^*, p_8\}$, 
    or $\{p_3, p_4^*, p_7, p_8\}$.
\end{itemize}

It is obvious that the adversarial ($N$,$k$)-defected model is weaker\footnote{Strictly speaking, we do not know the adversarial ($N$,$k$)-defected model is properly weaker than the distance-based one yet;
it is obvious that the adversarial ($N$,$k$)-defected model is NOT stronger
than the distance-based one.}
than the distance-based one, that is, any gathering algorithm for the adversarial ($N$,$k$)-defected model works correctly also in the distance-based ($N$,$k$)-defected model.

\subsection {Problem Definition: Gathering}
The \emph{Gathering} problem is to locate all robots at the same point.
We define the gathering problem as follows.

\begin{definition}
\textbf{The Gathering Problem} 
Given a set of $N$ robots located at arbitrary points.
Algorithm $\mathcal{A}$ solves the gathering problem 
if $\mathcal{A}$ satisfies all the following conditions:
(1) 
algorithm $\mathcal{A}$ eventually reaches 
a configuration such that no robot can move, 
and (2) when the algorithm $\mathcal{A}$ terminates, 
all the robots are located at the same point.
\end{definition}

\section{Algorithm in the Adversarial ($N$,$N-2$)-defected Model where $N \geq 5$}
\label{sec:nn-alg}
In this section, we introduce an algorithm to solve the gathering problem
in the adversarial ($N$, $N-2$)-defected model, where $N \geq 5$.

\begin{algorithm}[tbhp]
\caption{Gathering algorithm for robot $r_i$ in the adversarial $(N, N-2)$-defected model where $N\ge 5$}
\label{alg:nn-2gather}
{\bf functions:}
\begin{description}
\item[$~\cdot~ \opset()$] returns a set of points $\{p\:|\: p$ is occupied by $r_i$ or a point occupied by the robot that $r_i$ observed$\}$.
\item[$~\cdot~ \isMulti(p)$] returns $\TRUE$ if point $p$ is occupied by two or more robots that $r_i$ observed (weak multiplicity), otherwise $\FALSE$.
\end{description}
{\bf algorithm:}
\begin{algorithmic}[1]
\If{$\forall p \in \opset(): \isMulti(p) = \TRUE$}
\State{move to the center of the smallest enclosing circle of $\opset()$}
\ElsIf{($r_i$ is single) $\wedge$ $(\exists p \in \opset(): \isMulti(p) = \TRUE)$}
\State{move to an arbitrary point $p \in \opset()$ such that $\isMulti(p) = \TRUE$}
\ElsIf{$\forall p \in \opset(): \isMulti(p) = \FALSE$}
\State{move to the center of the smallest enclosing circle of $\opset()$}
\EndIf \Comment{No action if ($r_i$ is accompanied) $\wedge$ $(\exists p \in \opset(): \isMulti(p) = \FALSE)$}
\end{algorithmic}
\end{algorithm}

Algorithm \ref{alg:nn-2gather} presents an algorithm to achieve the gathering in the adversarial ($N$, $N-2$)-defected model where $N \ge 5$.
%
The algorithm adopts, as the destination of robot $r_i$, the center of the smallest enclosing circle (SEC) of the occupied points that $r_i$ observed in the \LL\ operation.
Before proving the correctness of the algorithm, we show some fundamental properties of the SEC of points in a plane.

\begin{proposition}
\label{prop:SEC}
Let $P$ be a set of $n$ distinct points in a plane and $C$ be the SEC of $P$.
The following properties hold.
\begin{enumerate}
    \item The SEC of $P$ is unique.
    \item Let $p \in P$ be any point (if exists) properly inside $C$, $C$ is the SEC of $P \setminus \{p\}$.
    \item When there exist three points $p_1, p_2, p_3 \in P$ on the boundary of $C$ that form an acute or right triangle, $C$ is the SEC of $\{p_1, p_2, p_3\}$.
    \item When three or more points in $P$ are on the boundary of $C$, there exist three points $p_1, p_2, p_3 \in P$ on the boundary of $C$ that form an acute or right triangle. \qed
  \end{enumerate}
\end{proposition}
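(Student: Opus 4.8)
The plan is to reduce all four parts to a single structural characterization of the SEC: a circle $C$ with center $o$ and radius $\rho$ is the SEC of a finite set $P$ if and only if $P$ lies in the closed disk bounded by $C$ and $o$ lies in the convex hull of the \emph{contact set} $S := P \cap \partial C$ (the points of $P$ on the boundary of $C$). I would first prove this lemma, then derive parts 1--4 from it.

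For the lemma, the ``only if'' direction is the substantive one. Assuming $C$ is the SEC, suppose $o \notin \mathrm{conv}(S)$; since $S$ is finite, $\mathrm{conv}(S)$ is compact, so a separating hyperplane yields a unit vector $u$ with $\langle q - o, u\rangle \ge \delta > 0$ for every $q \in S$. Shifting the center to $o + \varepsilon u$ and using $\|q - o - \varepsilon u\|^2 = \rho^2 - 2\varepsilon\langle q - o, u\rangle + \varepsilon^2$ shows that for small $\varepsilon > 0$ the distance to every contact point strictly drops, while the finitely many points of $P$ strictly inside $C$ stay inside by continuity; this gives a strictly smaller enclosing circle, a contradiction. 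For the ``if'' direction, write $o = \sum_i \lambda_i q_i$ with $\lambda_i > 0$, $\sum_i \lambda_i = 1$, $q_i \in S$; for any competing center $o'$ the identity $\sum_i \lambda_i \|q_i - o'\|^2 = \rho^2 + \|o - o'\|^2$ (the cross term vanishes because $\sum_i \lambda_i (q_i - o) = 0$), combined with $\|q_i - o'\| \le \rho' \le \rho$, forces $o' = o$ and then $\rho' = \rho$.

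Parts 1--3 are then short. Part 1 follows already from the ``if'' computation (any center of an enclosing circle of minimal radius equals $o$), and can also be seen directly: the midpoint of two distinct optimal centers $o_1,o_2$ encloses $P$ within radius $\sqrt{\rho^2 - \frac{1}{4}\|o_1 - o_2\|^2} < \rho$ by the parallelogram law, contradicting minimality. For Part 2, a point $p$ strictly inside $C$ is not in $S$, so $S$ is unchanged when $p$ is removed; hence the two hypotheses of the lemma hold for $P \setminus \{p\}$ with the same $C$, and the ``if'' direction gives that $C$ is its SEC. For Part 3, three distinct points on a circle are automatically non-collinear (a line meets a circle in at most two points), so $C$ is the circumcircle of $\{p_1,p_2,p_3\}$; an acute or right triangle contains its circumcenter, i.e.\ $o \in \mathrm{conv}(\{p_1,p_2,p_3\})$, and the lemma gives that $C$ is the SEC of $\{p_1,p_2,p_3\}$.

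Part 4 is where I expect the only genuine care. By the lemma $o \in \mathrm{conv}(S)$, and Carath\'eodory in the plane expresses $o$ as a convex combination of at most three points of $S$. If exactly three are needed, their closed triangle contains $o$, which is their circumcenter, so the triangle is acute or right. If two suffice, $o$ lies on a chord $q_1 q_2$, which must then be a diameter; appending any third point $q_3 \in S$ (available since $|S| \ge 3$) keeps $o \in \mathrm{conv}(\{q_1,q_2,q_3\})$, and a triangle with a diameter as one side is right-angled by Thales. The ``one point'' case cannot occur, since contact points are at distance $\rho > 0$ from $o$ (and $\rho > 0$ whenever $|S| \ge 3$ as the points are distinct). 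In every case we obtain three points of $S$ whose closed triangle contains $o$, i.e.\ an acute or right triangle. I would also remark that these SEC facts are classical; the task here is only to make them precise in the form used by Algorithm~\ref{alg:nn-2gather}.
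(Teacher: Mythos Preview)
Your argument is correct. The characterization lemma (the SEC of a finite set is the unique circle that encloses the set and whose center lies in the convex hull of the contact points) is the standard tool, and you deploy it cleanly: the separating-hyperplane shift for the ``only if'' direction and the variance identity $\sum_i \lambda_i \|q_i - o'\|^2 = \rho^2 + \|o-o'\|^2$ for the ``if'' direction are both sound, and Parts~1--3 drop out immediately. Your Part~4 via Carath\'eodory is also fine; the only point worth making explicit is that any three points of $S$ returned by Carath\'eodory are automatically non-collinear (being on a circle), so they genuinely form a triangle, and hence ``$o$ in the closed triangle'' is exactly ``acute or right.'' You already note this in Part~3, so it is just a matter of cross-referencing.

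As for comparison with the paper: there is nothing to compare. The paper states Proposition~\ref{prop:SEC} without proof (the trailing \qed\ signals that these are taken as classical facts about the smallest enclosing circle), so your write-up supplies what the paper deliberately omits. Your closing remark that ``these SEC facts are classical'' is exactly the paper's stance; if you want to match the paper's economy you could simply cite a standard reference and drop the proof, but as a self-contained justification your approach is the right one.
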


A key property of the ($N$, $N-2$)-defected model, 
which is used in the following proofs, 
is that \emph{any accompanied robot can observe all the robots} (but only with the weak multiplicity detection).

\begin{lemma}
\label{lemma:3ac}
In the adversarial ($N$, $N-2$)-defected model ($N \ge 5$), Algorithm \ref{alg:nn-2gather} solves the gathering problem in two rounds from any configuration where there exist three or more accompanied robots.
\end{lemma}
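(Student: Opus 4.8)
The plan is to exploit the key structural observation stated just above the lemma: in the $(N,N-2)$-defected model, any accompanied robot observes $N-2$ robots not sharing its point, hence together with (the detection of) its own point it ``sees'' every occupied point of the configuration, with correct weak multiplicity at every point except possibly its own --- but it also knows whether it is itself single or accompanied, so in fact an accompanied robot has the full weak-multiplicity picture of the whole configuration. A single robot, by contrast, may be missing exactly one occupied point (one of the $N-2$ it failed to observe) and may undercount multiplicities. The argument will therefore split on what the configuration looks like after round~$1$, driven by how each robot's line in Algorithm~\ref{alg:nn-2gather} fires.

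First I would analyse round~$1$ from an initial configuration $C_0$ with at least three accompanied robots. Each accompanied robot sees all occupied points; since there are $\ge 3$ accompanied robots, each accompanied robot sees at least three multi-occupied points, so $\exists p\in\opset()$ with $\isMulti(p)=\TRUE$, and (unless all occupied points are multi-occupied) an accompanied robot falls through to the ``no action'' case and stays put. A single robot $r_i$ sees $N-2$ of the other $N-1$ robots; with $\ge 3$ accompanied robots it necessarily observes at least two multi-occupied points (it can miss at most one robot, so at most one of the $\ge 3$ multi-points can drop out of its view entirely, and even that one would need all its occupants to coincide with the single missed robot --- impossible since a multi-point has $\ge 2$ robots), hence every single robot executes line~3--4 and jumps onto some multi-occupied point. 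Consequently $C_1$ has \emph{no single robots}: every robot is accompanied. I would then check the remaining subtlety --- that line~1 (``all observed points are multi'') cannot be triggered at a single robot in a way that breaks this, and that an accompanied robot that does see all-multi moves to the SEC center; here one uses Proposition~\ref{prop:SEC} together with the fact that all accompanied robots share the \emph{same} global picture, hence compute the \emph{same} SEC and the \emph{same} center, so they all move to one common point. Either way, $C_1$ is a configuration in which every robot is accompanied.

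Next I would analyse round~$2$ from $C_1$. Every robot in $C_1$ is accompanied, hence every robot sees the entire configuration identically (same set of occupied points, same weak multiplicities, and each knows it is accompanied). If all occupied points are multi-occupied --- which holds in $C_1$ --- every robot satisfies the guard of line~1 and moves to the center of the SEC of $\opset()$; since all robots observe the same point set, by Proposition~\ref{prop:SEC}(1) they compute the same unique SEC and hence the same center, so after round~$2$ all robots are gathered at one point. Finally I would verify termination: in the gathered configuration every robot observes the single occupied point (its own), which is multi-occupied, so line~1's guard holds vacuously over a one-point $\opset()$ and the prescribed move is to the center of the SEC of a single point --- namely that point itself --- so no robot moves. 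Thus gathering is achieved and stable within two rounds.

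The main obstacle I anticipate is the bookkeeping in round~$1$ for \emph{single} robots: one must argue carefully that no single robot's defected view can make it take an unintended branch. In particular, one must rule out a single robot $r_i$ seeing ``all observed points multi'' while missing the one point that would have been single-occupied, since then $r_i$ would run line~2 (SEC center) rather than line~4 (jump to a multi-point), and SEC centers computed by different single robots with different defective views need not coincide. The resolution is a counting argument using $N\ge 5$ and the presence of $\ge 3$ accompanied robots: a single robot misses at most one robot, so at most one occupied point can disappear from its view and that point cannot be one of the $\ge 3$ multi-points; hence $r_i$ still sees $\ge 2$ multi-points and also sees at least its own (single) point, so the ``all multi'' guard fails and line~3--4 fires as claimed. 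Making this case analysis airtight --- including the boundary situation where exactly three accompanied robots exist and a single robot's missed robot lies at one of those points --- is where the real care is needed; the rest is a direct application of Proposition~\ref{prop:SEC} and the ``accompanied $\Rightarrow$ full view'' principle.
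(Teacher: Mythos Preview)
Your approach is essentially the same as the paper's: split on whether every robot is already accompanied (then lines~1--2 fire everywhere and one round suffices) versus some single robot exists (then accompanied robots stay, singles jump to a multi-point, and one more round finishes). The overall structure, the use of the ``accompanied $\Rightarrow$ full view'' principle, and the appeal to Proposition~\ref{prop:SEC}(1) all match the paper.

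There is, however, a counting slip in your round-1 analysis. You repeatedly infer from ``$\ge 3$ accompanied robots'' that there are ``$\ge 3$ multi-points'' (and hence that a single robot still sees ``$\ge 2$ multi-occupied points''). This is false: three accompanied robots can all sit at a \emph{single} point, giving exactly one multi-point. The paper's argument is the correct one: with $\ge 3$ accompanied robots, either some point holds $\ge 3$ robots (so missing one of them still leaves $\ge 2$ there, detected as multi), or there are at least two points with exactly two robots each (so missing one robot can degrade at most one of them to appear single, leaving the other visibly multi). Either way the single robot sees \emph{at least one} multi-point --- and that is all you need, since, as you yourself correctly observe, the single robot's own point lies in $\opset()$ with $\isMulti=\FALSE$, which already defeats the line-1 guard. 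So your ``$\ge 2$ multi-points in view'' claim is both incorrect and unnecessary; replace it with the paper's two-case count (or simply with ``$\ge 1$''), keep your own-point observation, and the proof goes through exactly as intended.
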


\begin{proof}
When every robot is accompanied, each robot detects all the occupied points in the \LL\ operation and recognizes that each of them is occupied by multiple robots.  Every robot moves to the center of the SEC of all the occupied points (by lines 1 and 2 in Algorithm \ref{alg:nn-2gather}) and thus the gathering is achieved in one round.

When there exists a single robot $r$, every accompanied robot observes $r$ and does not move (see line 7 in Algorithm \ref{alg:nn-2gather}).  Every single robot misses at most one accompanied robot in its \LL\ operation and can detect at least one point occupied by multiple robots: a point occupied by three or more robots (if exists) or one of the two points each occupied by two robots.  Each single robot moves to one of such points (by lines 3 and 4 in Algorithm \ref{alg:nn-2gather}), which results in the configuration where every robot is accompanied.  Thus the gathering is achieved in the next round as shown above. 
\end{proof}

\begin{lemma}
\label{lemma:2ac}
In the adversarial ($N$, $N-2$)-defected model ($N \ge 5$), Algorithm \ref{alg:nn-2gather} solves the gathering problem in two rounds from any configuration where there exist only two accompanied robots (that are at the same point).
\end{lemma}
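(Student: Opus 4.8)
The plan is to follow the two rounds explicitly, using the fact (noted just before Lemma~\ref{lemma:3ac}) that an accompanied robot observes every robot. In the initial configuration one point $p$ carries the two accompanied robots and $N-2$ further points each carry one single robot, so $N-1$ points are occupied. I would first analyze round~1. The two robots at $p$ each observe exactly the $N-2$ single robots, so their $\opset()$ is the whole set of occupied points, with $p$ recognized as a multiplicity point (the robot knows it is accompanied) and every other point non-multi; hence line~7 of Algorithm~\ref{alg:nn-2gather} applies and neither moves. A single robot $s$, on the other hand, fails to see exactly one of the $N-1$ robots not on its point. If $s$ misses one of the $N-3$ other single robots (``Case~A''), it still sees both robots at $p$, so $p$ appears as a multiplicity point in $\opset()$ and, being single, $s$ moves to $p$ by lines~3--4. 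If $s$ misses one of the two robots at $p$ (``Case~B''), it still sees the other robot at $p$ together with all $N-3$ remaining single robots, so $\opset()$ equals the whole set of $N-1$ occupied points and every point in it is non-multi; hence $s$ moves by lines~5--6 to the center $c$ of the SEC of these $N-1$ points, which by Proposition~\ref{prop:SEC}(1) is the \emph{same} point $c$ for every Case~B robot.

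Next I would classify the configuration after round~1 by the number $b\in\{0,\dots,N-2\}$ of Case~B robots. The occupied points are $p$, now holding $N-b$ robots (the two originals and the $N-2-b$ Case~A robots), and, when $b\ge 1$, the point $c$ holding $b$ robots; if $b=0$ or $c=p$ then all $N$ robots are already at $p$ and we are done. Otherwise exactly two points are occupied, $p$ with $N-b\ge 2$ robots and $c$ with $b$ robots. If $b\ge 2$, every robot is accompanied, so round~2 runs exactly as in the proof of Lemma~\ref{lemma:3ac} for the case where every robot is accompanied: each robot sees $\{p,c\}$ as two multiplicity points and moves to the center of their SEC (their midpoint), achieving gathering. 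If $b=1$, then $p$ holds $N-1$ robots and $c$ holds one single robot $s'$; in round~2, $s'$ observes $N-2\ge 2$ robots, all located at $p$, so it sees $p$ as a multiplicity point and moves there by lines~3--4, while each robot at $p$ is accompanied and observes only $s'$, so its $\opset()$ is $\{p,c\}$ with $p$ multi and $c$ non-multi and it stays put by line~7; hence all robots end at $p$. In every branch gathering is reached within two rounds.

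The bookkeeping of $\opset()$ and of which branch of Algorithm~\ref{alg:nn-2gather} is taken in each situation is routine; the delicate point is the adversary's freedom in round~1 to split the single robots arbitrarily between Case~A and Case~B. This freedom is harmless for two reasons: the Case~B robots cannot be made to disagree, because each such robot's $\opset()$ is exactly the full set of the $N-1$ original occupied points (the robot it misses is one of the two at $p$, whose point stays occupied) and the SEC — hence its center $c$ — is unique by Proposition~\ref{prop:SEC}(1); and the only genuinely new configuration the adversary can produce, namely $b=1$, still collapses in one further round, since the lone remaining single robot sees $N-2\ge 2$ co-located robots and joins them while those correctly ignore it. Verifying this last sub-case — that a single robot is never stranded once the rest have coalesced at $p$ — is the main obstacle of the argument.
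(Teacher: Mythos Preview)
Your proof is correct and follows essentially the same route as the paper's own proof: the paper likewise lets the accompanied pair stay put in round~1, splits each single robot into case~(a) (sees both at $p$, moves there) and case~(b) (misses one at $p$, moves to the SEC center $p_b$ of all occupied points), and then in round~2 handles the three sub-cases where $p_b$ is unoccupied, multiply occupied, or singly occupied. Your Case~A/Case~B and your $b\in\{0,1,\ge 2\}$ split are exactly this decomposition with different labels.
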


\begin{proof}
Let $r_1$ and $r_2$ be the two accompanied robots.  Robots $r_1$ and $r_2$ observe all robots and recognize that single robots exist, which makes $r_1$ and $r_2$ stay at the current point.

Now consider actions of single robots.
A single robot $r$ misses one robot in its \LL\ operation, which implies that $r$ observes (a) both $r_1$ and $r_2$ or (b) only one of $r_1$ and $r_2$.
In case (a), $r$ moves to the point, say $p_a$, occupied by $r_1$ and $r_2$.  
In case (b), $r$ moves to the center, say $p_b$, of the SEC of all the occupied points.
Thus after one round, all the robots are located at $p_a$ or $p_b$.  Note that $p_a$ is occupied by multiple robots including $r_1$ and $r_2$.

When $p_b$ is not occupied by any robot, the gathering is already achieved.
When $p_b$ is occupied by multiple robots, the robots at $p_b$ observe all the robots. Thus, all the robots move to the center of the SEC of $p_a$ and $p_b$ (or the midpoint of $p_a$ and $p_b$) in the next round (by lines 1 and 2 in Algorithm \ref{alg:nn-2gather}), which achieves the gathering.
When $p_b$ is occupied by only one robot $r$, $r$ detects that $p_a$ is occupied by multiple robots and moves to $p_a$ in the next round (by lines 3 and 4 in Algorithm \ref{alg:nn-2gather}) while the robots at $p_a$ recognize that $p_b$ is occupied by only one robot and does not move (see line 7 in Algorithm \ref{alg:nn-2gather}).  Thus, the gathering is achieved.   
\end{proof}

\begin{lemma}
\label{lemma:0ac}
In the adversarial ($N$, $N-2$)-defected model ($N \ge 5$), Algorithm \ref{alg:nn-2gather} solves the gathering problem in three rounds from any configuration where all robots are single.
\end{lemma}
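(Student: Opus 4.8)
The plan is to reduce the all-single case to the configurations already handled by Lemmas~\ref{lemma:3ac} and~\ref{lemma:2ac}. From a configuration in which every robot is single, every robot executes line~5--6 (since no observed point is a multiplicity point), so each robot moves to the center of the SEC of the $N-1$ points it observes --- namely all the occupied points except at most one, which the adversary may hide. I would first pin down the SEC that each robot actually uses. Let $C$ be the SEC of the full set of $N$ occupied points. Using Proposition~\ref{prop:SEC}, I would argue that if a robot misses a point lying strictly inside $C$, it still computes $C$ (part~2); the only way a robot can compute a different, smaller circle is if the adversary hides one of the points on the boundary of $C$, and even then only when $C$ is determined by exactly two diametrically opposite boundary points or by very few boundary points. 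So the key case analysis is driven by how many of the $N$ points lie on the boundary of $C$.

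The main structural claim I would aim for is: after one round from an all-single configuration, the resulting configuration has at most two distinct occupied points, or else it has at least three accompanied robots (or two accompanied robots at one point), so that one of the earlier lemmas finishes the job in two more rounds. To see this, note that all robots that compute the true SEC $C$ move to its center $\ctr(C)$, forming one big multiplicity there. If three or more points lie on the boundary of $C$, then by Proposition~\ref{prop:SEC}(4) there is an acute/right triangle among them, hence (part~3) no robot --- not even one missing a single boundary point --- can shrink the circle below $C$ unless it misses one of those three triangle vertices; a careful count shows all but at most one or two robots still compute $C$, so $\ctr(C)$ ends up occupied by at least $N-2 \ge 3$ robots, i.e.\ we reach a configuration with three or more accompanied robots, and Lemma~\ref{lemma:3ac} applies. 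The remaining case is when only two points, say diametrically opposite on $C$, lie on its boundary: then a robot hiding one of the two endpoints computes a strictly smaller circle, but I would show only the (at most two) robots at those two extreme points can do so, and they are sent to essentially predictable locations, again leaving at most two occupied points or enough multiplicities to invoke Lemma~\ref{lemma:2ac} or Lemma~\ref{lemma:3ac}.

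Concretely, the step ordering is: (i) observe all robots run lines 5--6; (ii) classify each robot's computed circle via Proposition~\ref{prop:SEC}, splitting on $|{\rm boundary}(C) \cap \text{points}|$; (iii) in the "$\ge 3$ boundary points" case, count how many robots are forced to compute $C$ and conclude $\ctr(C)$ is occupied by $\ge 3$ robots, then cite Lemma~\ref{lemma:3ac}; (iv) in the "$2$ boundary points" case, enumerate the at most two deviating robots, show the post-round configuration has $\le 2$ occupied points or $\ge 2$ accompanied robots, and cite Lemma~\ref{lemma:2ac} (handling the sub-case of exactly two occupied single points by noting the adversary can no longer hide anything once robots coincide, so the next round gathers them); (v) add the round counts: one round to leave the all-single regime, then at most two more by the earlier lemmas, giving three rounds total.

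The hard part will be step~(iv), the degenerate case where the SEC is pinned by only two antipodal points: there the adversary has real power, because a robot that drops one of the two extreme points genuinely computes a different center, and I must show these deviations cannot conspire to keep the configuration "all single with $\ge 3$ occupied points" forever. I expect to handle it by observing that at most two robots sit at the two extreme points, so at least $N-2 \ge 3$ robots compute the true $C$ and pile up at $\ctr(C)$ --- which immediately gives three accompanied robots and reduces to Lemma~\ref{lemma:3ac} rather than Lemma~\ref{lemma:2ac} after all. If instead several points tie at an extreme, they are already a multiplicity, contradicting the all-single assumption; this observation is what makes the two-boundary-point case collapse cleanly, and verifying it rigorously (including the tie-breaking in the adversary's hiding choice) is where the bulk of the careful argument lies.
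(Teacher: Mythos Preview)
Your high-level plan is right and matches the paper: show that after one round at least two robots coincide, then invoke Lemmas~\ref{lemma:3ac} and~\ref{lemma:2ac}. But the counting you propose does not go through, because you conflate \emph{where a robot is located} with \emph{which robot the adversary hides from it}. Whether robot $r$ computes the true SEC $C$ depends only on the point hidden from $r$, not on $r$'s own position. In the two-boundary-point case you write ``at most two robots sit at the two extreme points, so at least $N-2$ robots compute the true $C$''; but the adversary may hide the \emph{same} extreme point $p_1$ from every robot (including the interior ones), and then no robot computes $C$ at all. The same objection breaks your $\ge 3$-boundary case: nothing prevents the adversary from hiding the same triangle vertex from every robot, so ``all but at most one or two still compute $C$'' is false as stated.

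The missing ingredient, which the paper supplies, is a preliminary dichotomy on the adversary's choices rather than on the geometry alone. If two robots miss the \emph{same} robot, they have identical $\opset()$ and move to the same point---done. Otherwise the map ``$r \mapsto$ the robot hidden from $r$'' is injective, so each robot is missed by exactly one observer. Only under this injectivity do the counts you want become valid: if at least two robots $r_a,r_b$ lie properly inside $C$, the (unique) observers missing $r_a$ and $r_b$ both compute $C$; if at most one robot is interior (so $\ge N-1 \ge 4$ robots are on the boundary of $C$), pick the acute/right triangle $r_1,r_2,r_3$ on the boundary, pick two further robots $r_4,r_5$ (here $N\ge 5$ is used), and the unique observers missing $r_4$ and $r_5$ still see the full triangle, hence both compute $C$. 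Without this injectivity step your argument is defeated by a ``monochromatic'' adversary that hides one fixed boundary point from everybody, and your two-boundary analysis in particular cannot be repaired along the lines you sketch.
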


\begin{proof}
Each robot misses one robot in its \LL\ operation.
When there exist two robot $r_1$ and $r_2$ that miss the same robot, $r_1$ and $r_2$ get the same point set $\opset()$ and moves to the center of the SEC of $\opset()$ (by lines 5 and 6 in Algorithm \ref{alg:nn-2gather}).  
From Lemmas \ref{lemma:3ac} and \ref{lemma:2ac}, two additional rounds are enough to achieve the gathering.

When no two robots miss the same robot, for any pair of two distinct robots $r_1$ and $r_2$, the robot missing $r_1$ is different from the robot missing $r_2$.  Let $C$ be the SEC of all the occupied $N$ points.  First, consider the case that two (or more) robots $r_a$ and $r_b$ are located properly inside $C$.  
The SEC of $R\setminus \{r_a\}$ is also the SEC of $R\setminus \{r_b\}$ (that is $C$ from the second property of Proposition \ref{prop:SEC}), which implies that the two robots observing $R\setminus \{r_a\}$ and $R\setminus \{r_b\}$ move to the same point (or the center of the SEC).  From Lemmas \ref{lemma:3ac} and \ref{lemma:2ac}, two additional rounds are enough to achieve the gathering.

Second, consider the case that $N-1$ or $N$ robots are on the boundary of $C$.  From the last property of Proposition \ref{prop:SEC}, there exist three robots $r_1, r_2, r_3$ on the boundary of $C$ that form an acute or right triangle.  There exist two robots $r_4$ and $r_5$ other than $r_1, r_2, r_3$ from $N \ge 5$.  Both the robots observing $R\setminus \{r_4\}$ and $R\setminus \{r_5\}$ observe all of $r_1, r_2, r_3$. The third property of Proposition \ref{prop:SEC} implies that the two robots find the same SEC of $\opset()$ (or the SEC of $r_1, r_2, r_3$), which implies that they move to the same point (or the center of the SEC) (by lines 5 and 6 in Algorithm \ref{alg:nn-2gather}).
From Lemmas \ref{lemma:3ac} and \ref{lemma:2ac}, two additional rounds are enough to achieve the gathering.
\end{proof}

From Lemmas \ref{lemma:3ac}, \ref{lemma:2ac} and \ref{lemma:0ac}, the following theorem holds.

\begin{theorem}
In the adversarial ($N$, $N-2$)-defected model ($N \ge 5$), Algorithm \ref{alg:nn-2gather} solves the gathering problem in three rounds. \qed
\end{theorem}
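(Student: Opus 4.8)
The plan is to derive the theorem directly from Lemmas~\ref{lemma:3ac}, \ref{lemma:2ac}, and \ref{lemma:0ac} by splitting on the number of accompanied robots present in the initial configuration $C_0$.

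First I would observe that no configuration can contain exactly one accompanied robot: an accompanied robot shares its point with at least one other robot, and that robot is then accompanied as well, so accompanied robots always occur in groups of size at least two. Hence $C_0$ falls into exactly one of three cases: (i) three or more accompanied robots; (ii) exactly two accompanied robots, necessarily co-located; or (iii) every robot single. In case~(i), Lemma~\ref{lemma:3ac} gives gathering within two rounds; in case~(ii), Lemma~\ref{lemma:2ac} gives gathering within two rounds; in case~(iii), Lemma~\ref{lemma:0ac} gives gathering within three rounds. Taking the worst of these bounds, Algorithm~\ref{alg:nn-2gather} brings all $N$ robots to a common point within three rounds from any $C_0$.

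Since each of the three lemmas already asserts that the algorithm \emph{solves} the gathering problem, the requirement that the reached configuration be one in which no robot can move is also covered; if one wants to see it explicitly, note that in a fully-gathered configuration every robot $r_i$ is accompanied yet observes no robot outside its own point $p$, so $\opset()=\{p\}$ and Algorithm~\ref{alg:nn-2gather} keeps $r_i$ at $p$, making the configuration stable.

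I do not anticipate a real obstacle: the three lemmas do all the work, and the only things that need care are verifying that the ``exactly one accompanied robot'' configuration is impossible---so that the three-way case split is exhaustive---and the elementary bookkeeping that each lemma's round bound does not exceed three.
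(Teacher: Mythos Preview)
Your proposal is correct and matches the paper's own argument, which simply cites Lemmas~\ref{lemma:3ac}, \ref{lemma:2ac}, and \ref{lemma:0ac} and concludes. Your additional remarks (that ``exactly one accompanied robot'' is impossible, and that the gathered configuration is stable) are sound and make explicit what the paper leaves implicit.
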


Algorithm \ref{alg:nn-2gather} cannot solve the gathering problem for the case of $N=4$.
Assume that four robots, $R=\{r_0, r_1, r_2, r_3\}$.  Three robots $r_1, r_2$ and $r_3$ are deployed to form an equilateral triangle as Figure \ref{impos_eqtri} and $r_0$ is located at the center of the triangle (\ie point $p_c$ in Figure \ref{impos_eqtri}).
Consider the case that $r_i$ observes $r_{(i+1)\ \text{mod}\ 3}$ and $r_{(i+2)\ \text{mod}\ 3}$ for each $i\ (0 \le i \le 3)$.
According to Algorithm \ref{alg:nn-2gather}, $r_0$ moves to the midpoint of $r_1$ and $r_2$, $r_1$ moves to $p_1$, $r_2$ moves to the midpoint of $r_2$ and $r_3$, and $r_3$ moves to the midpoint of $r_3$ and $r_1$.  
In the resultant configuration, $r_0, r_2$ and $r_3$ form an equilateral triangle and $r_1$ is located at the center $p_1$ of the triangle, which shows by repeating the argument that the gathering is never achieved.

Thus we need another gathering algorithm for the adversarial (4, 2)-defected model, 
however, we do not know whether the gathering problem in the adversarial (4,2)-defected model is solvable or not yet.
In the next section, we present an algorithm to solve the gathering problem 
in the distance-based (4,2)-defected model.

\section{Algorithm in the Distance-based (4,2)-defected Model}
\label{sec:42alg}
Now we present an algorithm to solve the gathering problem
from any arbitrary initial configuration 
in the distance-based (4,2)-defected model.

\noindent
{\bf Basic Strategy}.
In this model, the number of robots is 4 and each robot observes at most two other points occupied by some other robots 
(three points in total including one occupied by itself).
In other words, the observation result of each robot 
forms a triangle (by three points/robots) when every robot is single.
The strategy of the proposed algorithm is 
to determine one unique point from the formed triangle.
Therefore, if any two robots observe the same three points occupied by the robots (including itself), 
the two robots move to the same point according to the proposed algorithm.
If two or more robots are accompanied, the gathering can be achieved as the same manner 
introduced in Algorithm \ref{alg:nn-2gather}.
Obviously, in this strategy, we have to consider the case so that all 4 robots observe the different triangles.
We resolve this problem by the geometrical property 
(remind that each robot cannot observe the farthest robot from itself in 
the distance-based defected model).

\begin{algorithm}[bthp]
\caption{Gathering algorithm for robot $r_i$ in the distance-based (4,2)-defected model}
\label{alg:42gather}
{\bf functions:}
\begin{description}
\item[$~\cdot~ \opset()$] returns a set of points $\{p\:|\: p$ is occupied by $r_i$ or a point occupied by the robot that $r_i$ observed$\}$.
\item[$~\cdot~ \isMulti(p)$] returns $\TRUE$ if point $p$ is occupied by two or more robots that $r_i$ observed (weak multiplicity), otherwise $\FALSE$.
\end{description}
{\bf algorithm:}
\begin{algorithmic}[1]
\If{$\forall p \in \opset(): \isMulti(p) = \TRUE$}
\State{move to the center of the smallest enclosing circle of $\opset()$}
\ElsIf{($r_i$ is single) $\wedge$ $(\exists p \in \opset(): \isMulti(p) = \TRUE)$}
\State{move to an arbitrary point $p \in \opset()$ such that $\isMulti(p) = \TRUE$}
\ElsIf{$\forall p \in \opset(): \isMulti(p) = \FALSE$}
    \If{$\opset()$~forms an equilateral triangle}
    \State{move to the center of the triangle (\ie incenter)} \Comment{Rule 1}
    \ElsIf{$\opset()$~forms an isosceles triangle}
    \State{move to the midpoint of the base of the triangle} \Comment{Rule 2}
    \Else \Comment{the other triangle or collinear three points}
    \State{move to the midpoint of the longest line} \Comment{Rule 3}
    \EndIf 
\EndIf \Comment{No action if ($r_i$ is accompanied) $\wedge$ $(\exists p \in \opset(): \isMulti(p) = \FALSE)$}
\end{algorithmic}
\end{algorithm}

Algorithm \ref{alg:42gather} presents an algorithm to achieve the gathering in the distance-based (4,2)-defected model.
Each robot which does not observe any accompanied robots executes one among three rules
(lines from 6 to 11 in Algorithm \ref{alg:42gather}). 
Figure \ref{alg42rules} illustrates these three rules.
If a robot observes an equilateral triangle (\ie the points observed by the robot form an equilateral triangle), 
it moves to the center of the triangle (Figure \ref{alg42rules}(a)), 
and if it observes an isosceles triangle, it moves to the midpoint of the base of the triangle (Figure \ref{alg42rules}(b)).
In the other case, it moves to the center point of the longest line of the triangle (Figure \ref{alg42rules}(c)).
It is obvious that if any two robots observe the same set of points (\ie the same \emph{view}: $r_i.\opset() = r_j.\opset()$, where $i \neq j$),
the two robots move to the same point according to Algorithm \ref{alg:42gather}.
Hence the following lemma holds.


\begin{lemma}
\label{lem:sameview}
In any configuration, if two or more robots have the same view, 
the robots move to the same point in one round by Algorithm \ref{alg:42gather}. \qed
\end{lemma}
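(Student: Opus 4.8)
The plan is to observe that Algorithm~\ref{alg:42gather} is, in effect, a deterministic map from a robot's \emph{view} to a destination point, so that Lemma~\ref{lem:sameview} reduces to checking that this map is single-valued. Concretely, the branch taken by a robot $r_i$ and the point it then computes depend only on (i) the observed set of occupied points $\opset()$, (ii) the flag $\isMulti(p)$ for each $p \in \opset()$, and (iii) whether $r_i$ is single or accompanied --- none of which refer to $r_i$'s identity or to a global frame. Hence two robots with the same view enter the same conditional branch and feed the same data to the same geometric construction, and it remains to argue that each such construction returns a point uniquely determined by its input; equivalently, that these constructions commute with the similarity transformation relating a robot's local coordinate system to the global one, so that ``same construction, same input'' forces ``same global destination.''

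I would then run through the branches. Lines~1--2 and lines~5--6 return the center of the smallest enclosing circle of $\opset()$, which is unique by Proposition~\ref{prop:SEC}. Line~4 moves $r_i$ to an arbitrary point $p \in \opset()$ with $\isMulti(p) = \TRUE$; here I would use the structural fact that in the $(4,2)$-defected model a single robot observes only two other robots and therefore sees at most one multiplicity point, so this choice is actually forced and two robots with the same view select the same point. In the all-single branch, Rule~1 returns the incenter of an equilateral triangle, which coincides with its centroid/circumcenter and is obviously unique; Rule~2 returns the midpoint of the base of an isosceles but non-equilateral triangle, where ``the base'' is well defined as the unique side whose length differs from the other two; and Rule~3 returns the midpoint of the longest side in the remaining (scalene triangle, or collinear) configurations, which is unambiguous because a tie for the longest side would make the triangle isosceles or equilateral --- contradicting the case hypothesis --- and because for three distinct collinear points the longest of the three segments is the one joining the two extreme points.

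Since every branch yields a point that is a function of the view alone, any two robots with the same view are assigned the same destination, and as the scheduler is $\FSYNC$ and the movement is rigid, both of them reach that destination within the round, which proves the lemma. There is no real obstacle for a statement the authors rightly call ``obvious''; the only places that warrant a line of care are line~4 (arguing the ``arbitrary'' choice is in fact determined, using $k=2$) and the well-definedness of ``the base'' and ``the longest side'' in Rules~2 and~3 --- everything else is immediate from Proposition~\ref{prop:SEC} and from the fact that the algorithm never consults a robot's label, axis, or chirality.
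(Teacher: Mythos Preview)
Your proposal is correct and follows the same idea the paper invokes: the algorithm is a deterministic function of the observed data, so equal views yield equal destinations. The paper itself offers no proof beyond the remark ``it is obvious'' and the \qed\ mark, whereas you actually carry out the case analysis; in particular, your observations that the ``arbitrary'' choice in line~4 is forced because $k=2$ admits at most one multiplicity point, and that ``the base'' in Rule~2 and ``the longest line'' in Rule~3 are well defined precisely because the equilateral and isosceles cases have already been filtered out, fill in details the paper leaves implicit.
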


\begin{figure}[b]
	\begin{center}
		\subfloat[Case of an equilateral triangle]{\includegraphics[scale=0.7]{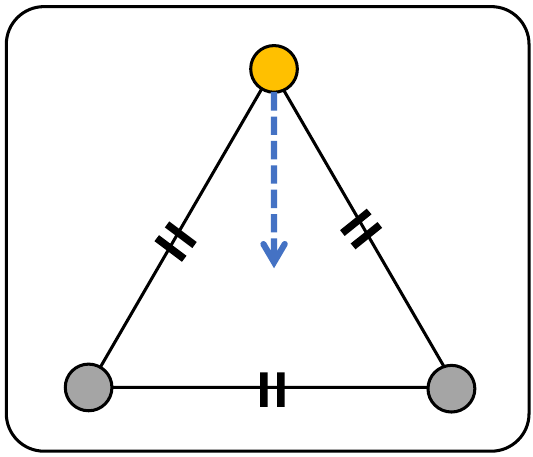}}
        \hspace{10pt}
		\subfloat[Case of an isosceles triangle]{\includegraphics[scale=0.7]{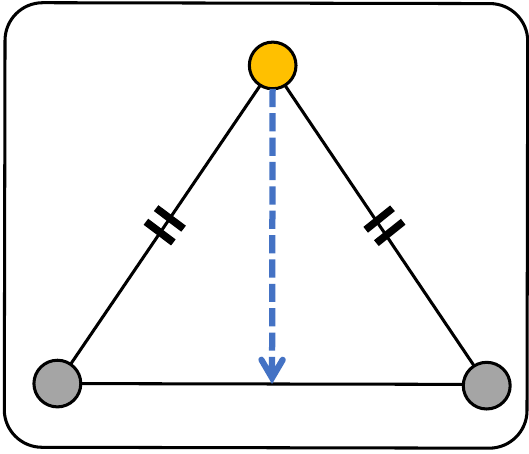}}
        \hspace{10pt}
		\subfloat[The other case]{\includegraphics[scale=0.7]{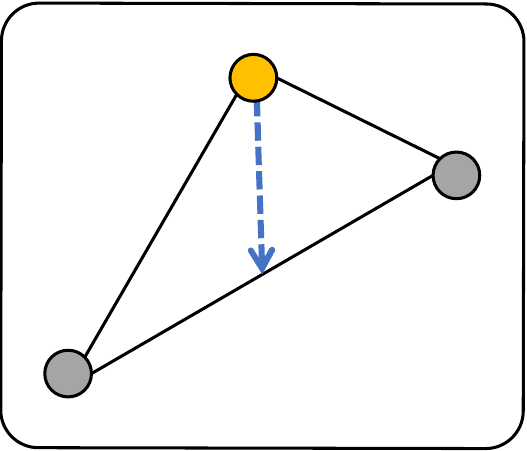}}
	\end{center}
    \caption{Three rules in Algorithm \ref{alg:42gather}}
    \label{alg42rules}
\end{figure}

In Algorithm \ref{alg:42gather}, actions when a robot observes any accompanied robots (including itself) 
are the exactly same as Algorithm \ref{alg:nn-2gather} (lines from 1 to 4 in both algorithms).
Therefore, by Lemma \ref{lemma:3ac}, the gathering is achieved in one round when there are three or more accompanied robots.
Now we consider the case that there are two accompanied robots 
in the following lemma.

\begin{lemma}
\label{lemma:2acin4}
In the distance-based (4,2)-defected model, 
Algorithm \ref{alg:42gather} solves the gathering problem in two rounds 
from any configuration where there exist only two accompanied robots.
\end{lemma}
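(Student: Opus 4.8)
The plan is to fix notation, determine what every robot does in the first round by a case analysis driven by the three pairwise distances, and then check that each configuration that can result is gathered within one further round.

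\emph{Setup.} Let the two accompanied robots occupy the point $p_a$, and let $r_b,r_c$ be the single robots, which then necessarily sit at distinct points $p_b,p_c$, both different from $p_a$. Write $d_b=|p_ap_b|$, $d_c=|p_ap_c|$, $e=|p_bp_c|$. First I would observe that each accompanied robot, being forbidden to observe its partner, sees exactly $r_b$ and $r_c$ (the only two robots not accompanied with it), so its $\opset()$ is $\{p_a,p_b,p_c\}$ with only $p_a$ flagged multiple; it therefore falls through to the ``no action'' branch and stays at $p_a$ throughout the first round.

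\emph{Behaviour of a single robot.} Analyse $r_b$ (the argument for $r_c$ is symmetric under $d_b\leftrightarrow d_c$). The two robots at $p_a$ are both at distance $d_b$ from $r_b$, while $r_c$ is at distance $e$. If $d_b<e$, the two closest robots to $r_b$ are the ones at $p_a$, so $r_b$'s view is $\{p_a,p_b\}$ with $\isMulti(p_a)=\TRUE$ and, by lines~3--4, $r_b$ moves onto $p_a$. If $d_b>e$, the two closest robots are $r_c$ and one robot from $p_a$, so $r_b$'s view is the triangle $p_ap_bp_c$ with every point flagged single, and by lines~6--11 it applies one of Rules~1--3 to triangle $p_ap_bp_c$. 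If $d_b=e$, the arbitrary tie-break puts $r_b$ in one of these two situations. Thus in the first round each single robot either moves onto $p_a$ or applies a geometric rule to triangle $p_ap_bp_c$.

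\emph{Case split on the round-1 destinations.} (i) Both single robots move onto $p_a$: all four robots are then at $p_a$ and gathering is done in one round. (ii) Exactly one single robot, say $r_b$, moves onto $p_a$ while $r_c$ applies a rule, moving to a point $q$: if $q=p_a$ we are done, and otherwise round~1 yields three accompanied robots at $p_a$ and the single robot $r_c$ at $q$, which — since the accompanied-robot branches of Algorithm~\ref{alg:42gather} coincide with those of Algorithm~\ref{alg:nn-2gather}, so the observation preceding this lemma applies (cf.\ Lemma~\ref{lemma:3ac}) — gathers in one more round. (iii) Both single robots apply a geometric rule: then each of them misses exactly one of the robots at $p_a$, so both record the \emph{identical} view $\{p_a,p_b,p_c\}$ with all three points flagged single, and Lemma~\ref{lem:sameview} forces them to a common point $q$; if $q=p_a$ we are done, otherwise round~1 yields two accompanied robots at $p_a$ and two at $q$, so every robot sees its own point and the other point both flagged multiple and, by lines~1--2, all four move to the center of the SEC of $\{p_a,q\}$, finishing in one more round. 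In every case the bound of two rounds holds.

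\emph{Main obstacle.} The delicate point is case~(iii): one must argue that two single robots that both take the geometric branch genuinely compute the \emph{same} $\opset()$ \emph{with the same multiplicity flags} — the crux being that each of them, seeing only one of the two robots at $p_a$, perceives $p_a$ as singly occupied — which is precisely what lets Lemma~\ref{lem:sameview} apply. The tie cases $d_b=e$ and $d_c=e$ also need to be folded cleanly into the case split rather than handled separately, and one should confirm that the intermediate ``$3{+}1$'' and ``$2{+}2$'' accompanied configurations really are resolved in a single additional round in the distance-based $(4,2)$ model.
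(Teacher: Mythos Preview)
Your proof is correct and follows essentially the same approach as the paper's: the accompanied pair stays put, and the two single robots either detect the multiplicity at $p_a$ and move there (yielding $\ge 3$ accompanied robots, handled via the Lemma~\ref{lemma:3ac} argument) or both miss one robot at $p_a$ and hence share the identical view $\{p_a,p_b,p_c\}$, so Lemma~\ref{lem:sameview} sends them to a common point. Your write-up is somewhat more explicit than the paper's---you spell out the distance comparisons driving the tie-breaks, separate the ``one goes to $p_a$, one applies a rule'' sub-case, and verify the $3{+}1$ and $2{+}2$ endgames by hand---but the underlying decomposition and the key invocation of Lemma~\ref{lem:sameview} are the same.
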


\begin{proof}
Let $r_1$ and $r_2$ be the two accompanied robots, and let $r_3$ and $r_4$ be the two single robots. 
Robots $r_1$ and $r_2$ observe all robots and recognize that single robots exist, which makes $r_1$ and $r_2$ stay at the current point 
(by line 13).
If robot $r_3$ (or $r_4$) observes both $r_1$ and $r_2$, it moves to the point of $r_1$ and $r_2$ (by lines 3 and 4), 
this results in the configuration where there exist three or more accompanied robots and the gathering can be achieved by Lemma \ref{lemma:3ac}.
Now we assume that two robots $r_3$ and $r_4$ miss either $r_1$ or $r_2$ (\ie they observe only one robot either $r_1$ or $r_2$).
This implies that robots $r_3$ and $r_4$ observe each other, 
and these two robots observe the exactly same set of (three) points because robots $r_1$ and $r_2$ are at the same point (by lines from 6 to 11).
Therefore, robots $r_3$ and $r_4$ move to the same point in one round (by Lemma \ref{lem:sameview}) and the gathering is achieved in the next round.
\end{proof}

Even when all 4 robots are single, if two or more robots observe the same set of points, 
the robots move to the same point (by Lemma \ref{lem:sameview}), 
thus the gathering is achieved by Lemma \ref{lemma:3ac} and Lemma \ref{lemma:2acin4}.

Now we show that the gathering is eventually achieved in any configuration 
where all 4 (single) robots have the different views (\ie observe the different set of points).

\begin{lemma}
\label{lem:convex}
In the distance-based (4,2)-defected model, 
if all robots have the different views, 
the shape formed by the robots is a convex rectangle.
\end{lemma}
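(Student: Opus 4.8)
The statement says that if all four single robots have pairwise distinct views, then the four occupied points form a convex quadrilateral (no point inside the triangle of the other three, no three collinear). The plan is to prove the contrapositive: if the four points are \emph{not} in convex position, then two robots necessarily share the same view, i.e.\ observe the same triple of points. In the distance-based $(4,2)$ model, each single robot $r_i$ observes the two closest other robots and misses the single farthest one; hence the view of $r_i$ is exactly $R \setminus \{f(r_i)\}$, where $f(r_i)$ is the robot farthest from $r_i$. Two robots $r_i$ and $r_j$ have the same view iff $f(r_i) = f(r_j)$. So it suffices to show: if the four points are not in convex position, then some robot is the farthest from two distinct robots.

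\textbf{Key steps.} First I would dispose of the degenerate sub-case where three (or four) of the points are collinear: if $r_a, r_b, r_c$ lie on a line in that order, then $r_a$ and $r_c$ are each other's farthest among themselves, and a short case analysis on where the fourth point $r_d$ lies (and on the relative lengths) shows that either two robots share the same farthest robot or some other coincidence of views occurs; in the worst case one checks directly that not all four views can be distinct. Second, and this is the main case, suppose no three points are collinear but one point — say $r_d$ — lies strictly inside the triangle $T = \triangle r_a r_b r_c$. I would then argue that the farthest point from $r_d$ is one of $r_a, r_b, r_c$ (trivially, since $r_d$ is interior to their triangle, actually its distance to each of $a,b,c$ is positive and the farthest of the four from $r_d$ is among $\{a,b,c\}$); more importantly, I claim that for at least two of the three vertices, the farthest of the remaining three points is the \emph{same} vertex. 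Concretely: let $r_a$ be the vertex of $T$ realizing the largest pairwise distance within $\{a,b,c\}$ that is attained as a ``farthest from'' relation — equivalently consider the longest side of $T$, say $r_b r_c$ is the longest side. Then $f(r_b)$ and $f(r_c)$: from $r_b$, the candidates are $r_a, r_c, r_d$; since $r_d \in T$, $|r_b r_d| < \max(|r_b r_a|, |r_b r_c|)$, so $f(r_b) \in \{r_a, r_c\}$, and similarly $f(r_c) \in \{r_a, r_b\}$. I would use the fact that $r_b r_c$ is the longest side to pin these down, and separately handle $f(r_a)$ and $f(r_d)$, showing a forced coincidence among the four values $f(r_a), f(r_b), f(r_c), f(r_d)$ by a pigeonhole-style argument: these four values lie in a 4-element set, but I will show the map $f$ cannot be a bijection (it cannot be a single $4$-cycle or product of two $2$-cycles, because a $2$-cycle $f(x)=y, f(y)=x$ means $x,y$ are mutually farthest, and one checks at most one such mutual pair can exist when a point is interior; and interiority forbids $r_d$ from being anyone's farthest in a way consistent with a full permutation).

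\textbf{Main obstacle.} The delicate part is the geometric lemma that underlies the pigeonhole step: I need that when $r_d$ is strictly inside $\triangle r_a r_b r_c$, the ``farthest-from'' function $f$ on the four points cannot be a permutation of $\{a,b,c,d\}$. The cleanest route I foresee is: (i) $r_d$ is never the farthest point from any vertex (since each edge from a vertex to $r_d$ is shorter than one of the two triangle sides at that vertex, as $r_d$ is inside), so $r_d \notin \mathrm{Image}(f|_{\{a,b,c\}})$, forcing two of $f(a), f(b), f(c)$ to coincide unless $f$ maps $\{a,b,c\}$ onto $\{a,b,c\}$ — but any permutation of a $3$-set either has a fixed point (impossible, $f(x)\ne x$) or is a $3$-cycle, and a $3$-cycle on the vertices of a triangle would require each vertex to be the strictly-farthest from its predecessor, which contradicts the existence of a strictly longest (or a repeated-longest) side by a standard angle/length argument. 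Thus two of $f(a), f(b), f(c)$ already agree and the corresponding two robots share a view, completing the contrapositive. I would also record the boundary case where $r_d$ lies \emph{on} a side of $\triangle r_a r_b r_c$ (collinear case) as handled above, so every non-convex configuration is covered.
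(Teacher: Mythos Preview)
Your overall strategy---prove the contrapositive by showing that in a non-convex configuration the interior point $r_d$ can never be the missed (farthest) robot of any outer vertex---is exactly what the paper does. The paper calls the interior point $r_1$ (the reflex vertex of the concave quadrilateral) and establishes the same fact via an angle chase at that vertex; your convex--combination argument for claim~(i), namely $|v r_d| < \max(|v r_u|,|v r_w|)$ for every outer vertex $v$, is a clean alternative and is correct.

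There is, however, a genuine gap in your ``Main obstacle'' paragraph. You try to finish by restricting $f$ to the three outer vertices and arguing that a fixed-point-free self-map of $\{a,b,c\}$ (a $3$-cycle) is geometrically impossible. It is not: if $\triangle r_a r_b r_c$ is equilateral with $r_d$ strictly inside, then each outer vertex is equidistant from the other two, and since ties are broken adversarially the missed robots can perfectly well be chosen to form the $3$-cycle $f(a)=b$, $f(b)=c$, $f(c)=a$. Your ``strictly-farthest'' premise is not guaranteed by the model. So the conclusion ``two of $f(a),f(b),f(c)$ already agree'' does not follow.

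The fix is immediate from your own claim~(i) and is precisely what the paper does: since $r_d$ is never anyone's farthest robot and $f(r_d)\neq r_d$, all four values $f(r_a),f(r_b),f(r_c),f(r_d)$ lie in the three-element set $\{r_a,r_b,r_c\}$; by pigeonhole two of them coincide, and those two robots have identical views. You actually gesture at this (``these four values lie in a 4-element set, but I will show the map $f$ cannot be a bijection''), but then abandon it for the flawed $3$-cycle route. Drop the $3$-cycle discussion entirely and apply the four-into-three pigeonhole directly; that closes the main case. The collinear sub-cases fall to the same argument (the ``inner'' collinear point plays the role of $r_d$).
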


\begin{proof}
We prove the contraposition of the lemma: 
if the robots do not form a convex rectangle, there exist two robots having the same view.

\begin{figure}[t]
	\begin{center}
	    \includegraphics[scale=0.7]{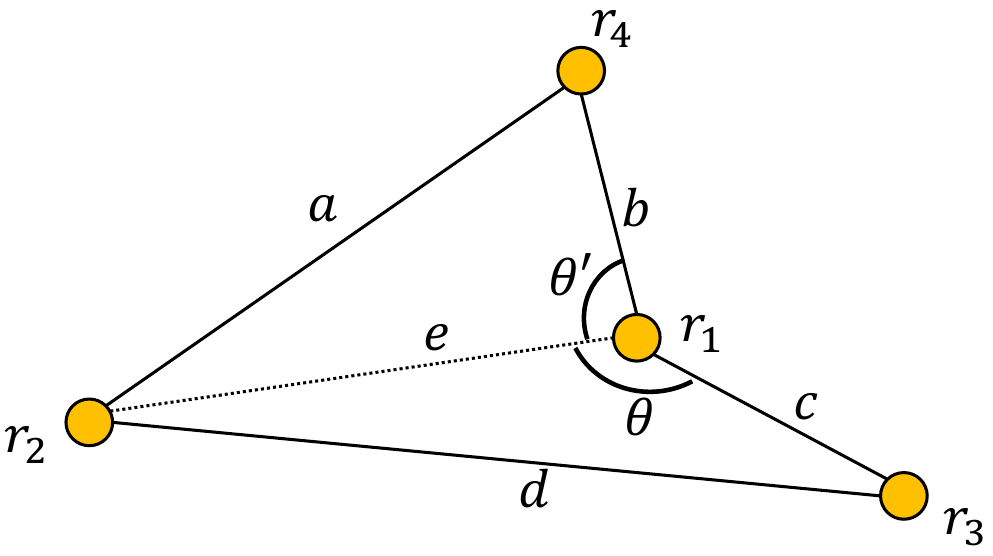}
	\end{center}
    \caption{An example of a concave rectangle}
    \label{concave}
\end{figure}

Assume that the 4 robots, from $r_1$ to $r_4$, form a concave rectangle as Figure \ref{concave}
(Note that we can also assume that the robots form a triangle (\ie three robots are collinear), it can be also proved in the same manner).
A concave rectangle has an interior angle which is larger than $180^{\circ}$, 
so we assume robot $r_1$ is located at the point with such an angle as Figure \ref{concave}.
Let $e$ be the line $\overline{r_1r_2}$, either angle $\angle{r_2r_1r_4}$ or angle $\angle{r_2r_1r_3}$ is an obtuse angle (\ie angle larger than $90^{\circ}$) 
because interior angle $\angle{r_4r_1r_3}$ is larger than $180^{\circ}$.
Without loss of generality, we assume angle $\angle{r_2r_1r_3}$ is an obtuse angle (denoted by $\theta$).
Due to $\theta > 90^{\circ}$, $d$ is longer than $c$ and $e$ (see Figure \ref{concave}). 
This implies that robot $r_3$ observes $r_1$ and robot $r_2$ also observes $r_1$ (because the farthest robot is missed in the distance-based defected model).
If angle $\angle{r_2r_1r_4}$ (denoted by $\theta'$) is also an obtuse angle, robot $r_4$ also observes $r_1$ by the same reason.
As a result, all robots observe $r_1$ (including $r_1$ itself) and the lemma holds 
because there is a robot which has the same view with robot $r_1$ by the pigeonhole principle.
If angle $\angle{r_2r_1r_4}$ is an acute angle (\ie angle smaller than $90^{\circ}$) or a right angle, 
$\theta + \theta' < 270^\circ$ holds.
This means that an exterior angle $\angle{r_4r_1r_3}$ (\ie $360^\circ - \theta - \theta'$) is an obtuse angle, 
thus $b$ is shorter than $\overline{r_4r_3}$. Also in this case, robot $r_4$ observes $r_1$ and the lemma holds.
\end{proof}

\begin{lemma}
\label{lem:unob22}
Assume that all robots have the different views.
If robot $r_i$ cannot observe robot $r_j$ (\ie robot $r_i$'s view does not include the point occupied by $r_j$), 
$r_j$ cannot observe $r_i$ neither.
\end{lemma}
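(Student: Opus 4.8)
The plan is to recast the statement in terms of the map $f$ that sends each robot to the one it fails to observe. In the distance-based $(4,2)$-defected model a robot sees the two closest of the other three robots, so $f(r)$ is exactly the (uniquely tie-broken) robot farthest from $r$, and ``$r_i$ cannot observe $r_j$'' means precisely ``$f(r_i)=r_j$''. By Lemma~\ref{lem:convex} the hypothesis forces the four robots to occupy the four distinct vertices $A,B,C,D$ of a convex quadrilateral, which I fix in cyclic order (no three collinear); in particular every robot is single, and so each robot's view is the set of all four occupied points minus the point of $f(r)$. Hence two robots have the same view iff they miss the same robot, so ``all views different'' is equivalent to $f$ being injective, i.e.\ a permutation of $\{A,B,C,D\}$. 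Since $f(r)\neq r$ always, $f$ is fixed-point free, so its cycle type is either $2+2$ (a product of two transpositions --- an involution) or a single $4$-cycle. The conclusion $f(r_i)=r_j\Rightarrow f(r_j)=r_i$ says exactly that $f$ is an involution, so \textbf{it suffices to rule out the case that $f$ is a $4$-cycle.}

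The only geometric input I need is that for a convex quadrilateral $ABCD$ whose diagonals meet at the interior point $X$, applying the triangle inequality in $\triangle AXB,\triangle BXC,\triangle CXD,\triangle DXA$ (each strict, since no three of $A,B,C,D$ are collinear) and summing gives
\[
  \overline{AB}+\overline{BC}+\overline{CD}+\overline{DA}\;<\;2\bigl(\overline{AC}+\overline{BD}\bigr).
\]
Now suppose $f$ is a $4$-cycle, and split into two cases according to whether the cycle runs along the cyclic order of the quadrilateral. If it does --- say $f(A)=B,f(B)=C,f(C)=D,f(D)=A$ (the opposite orientation is symmetric) --- then each ``$f(v)=w$'' says $w$ is no closer to $v$ than either of $v$'s remaining neighbours, yielding $\overline{AB}\ge\overline{AC}$, $\overline{CD}\ge\overline{AC}$, $\overline{BC}\ge\overline{BD}$, $\overline{DA}\ge\overline{BD}$; summing these four gives $\overline{AB}+\overline{BC}+\overline{CD}+\overline{DA}\ge 2(\overline{AC}+\overline{BD})$, contradicting the displayed inequality. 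If the $4$-cycle is not a cyclic rotation, then up to a symmetry of the quadrilateral (the dihedral group acting on $A,B,C,D$ permutes the four non-rotation $4$-cycles transitively) we may assume $f(A)=C,f(C)=B,f(B)=D,f(D)=A$; then the inequalities chain as $\overline{DA}\ge\overline{DB}\ge\overline{BC}\ge\overline{CA}\ge\overline{DA}$, forcing $\overline{CA}=\overline{CB}=\overline{DA}=\overline{DB}$. Thus $C$ and $D$ are two distinct points each at this common distance from both $A$ and $B$; but two equal circles centred at the distinct points $A$ and $B$ meet in at most two points, which lie on opposite sides of line $AB$, whereas $C$ and $D$ must lie strictly on the same side of the edge $AB$ of a convex quadrilateral --- a contradiction. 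Hence $f$ is not a $4$-cycle, so it is an involution and the lemma follows.

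I expect the main obstacle to be the bookkeeping in the $4$-cycle case: first recognising that, up to relabelling, there are only two geometric situations (the cycle follows the convex order, or it traverses both diagonals together with a pair of opposite sides), and then extracting from the eight ``not-closer-than'' inequalities the single chain that collapses to the impossible equality of four distances (or, in the rotation case, the sum that contradicts the convex-quadrilateral inequality). A secondary point to handle carefully is that tie-breaking is arbitrary, so each equation $f(v)=w$ only yields $\overline{vw}\ge\overline{v\,\cdot}$ rather than strict inequality; the argument above is arranged so that this is harmless, since the quadrilateral inequality is strict and the two-circles argument needs no strictness at all.
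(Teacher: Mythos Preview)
Your proof is correct, but it takes a different route from the paper's. Both arguments reduce to ruling out the possibility that the ``missed-robot'' map $f$ is a $4$-cycle; the difference is in how the contradiction is extracted.

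The paper does \emph{not} invoke Lemma~\ref{lem:convex} at all. It simply labels the robots $r_1,r_2,r_3,r_4$ along the putative $4$-cycle (so $f(r_i)=r_{i+1\bmod 4}$) and chains the inequalities $\overline{r_1r_4}\le\overline{r_1r_2}\le\overline{r_2r_3}\le\overline{r_3r_4}\le\overline{r_1r_4}$ to force all four of these distances equal, say to~$1$; the two remaining distances $\overline{r_1r_3}$ and $\overline{r_2r_4}$ are then both $\le 1$. Since $r_2$ and $r_4$ each lie on the perpendicular bisector of $r_1r_3$ at distance~$1$ from $r_1$, they are either equal or on opposite sides of $r_1r_3$; the latter gives a rhombus of side~$1$, whose diagonals cannot both be $\le 1$. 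This handles every $4$-cycle uniformly, with no case split and no appeal to convexity.

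Your argument instead fixes the convex cyclic order first and then distinguishes whether the $4$-cycle respects it. This costs you a case analysis (and a small symmetry reduction via the dihedral action, which you carried out correctly), but it buys two attractive ingredients: the strict inequality $\text{perimeter}<2(\text{sum of diagonals})$ for convex quadrilaterals, and the two-circles argument. Both routes are short; the paper's is slightly more economical because it avoids the dependence on Lemma~\ref{lem:convex} and treats the $4$-cycle in one stroke.
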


\begin{proof}
To help to explain, we introduce a directed graph $\vec{G} = (V,A)$ such that 
$V = \{r_1, r_2, r_3, r_4\}$ and $(r_i, r_j) \in A$ if robot $r_i$ cannot observe $r_j$.
If all robots have the different views, there exist only two cases as Figure \ref{dag}.
And we show that there is no case as Figure \ref{dag}(a) to prove the lemma.


\begin{figure}[b]
 \begin{minipage}{0.6\hsize}
	\begin{center}
		\subfloat[]{\includegraphics[scale=0.7]{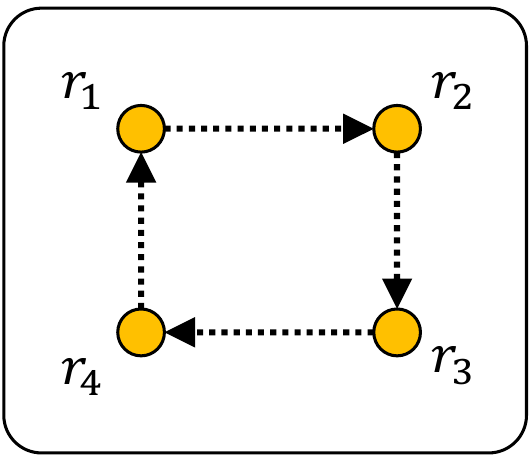}}
		\subfloat[]{\includegraphics[scale=0.7]{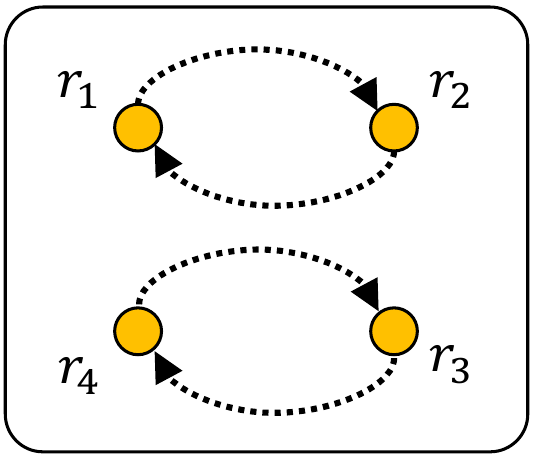}}
	\end{center}
    \caption{Directed graphs showing an unobserved robot}
    \label{dag}
 \end{minipage}
 \hfill
 \begin{minipage}{0.4\hsize}
	\begin{center}
	    \includegraphics[scale=0.7]{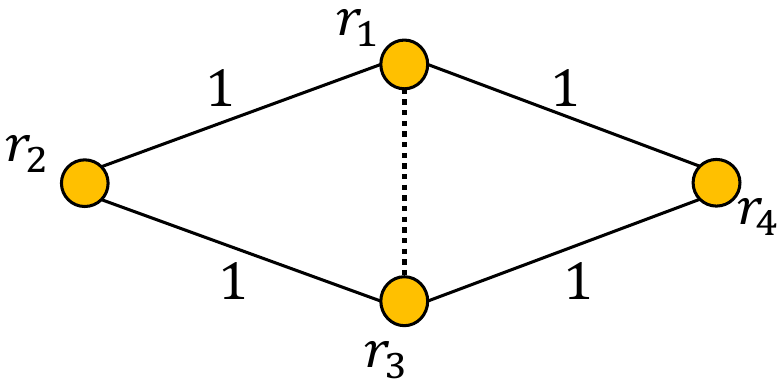}
	\end{center}
    \caption{Two isosceles triangles}
    \label{twotri}
 \end{minipage}
\end{figure}

Assume the case as Figure \ref{dag}(a): 
robot $r_1$ cannot observe $r_2$, robot $r_2$ cannot observe $r_3$, and so on.
$\overline{r_1r_4} \leq \overline{r_1r_2}$ holds because robot $r_1$ cannot observe $r_2$.
For the same reason, 
$\overline{r_1r_2} \leq \overline{r_2r_3}$, $\overline{r_2r_3} \leq \overline{r_3r_4}$,
and $\overline{r_3r_4} \leq \overline{r_1r_2}$ also hold.
Therefore, $\overline{r_1r_4} \leq \overline{r_1r_2} \leq \overline{r_2r_3} \leq \overline{r_3r_4} \leq \overline{r_1r_4}$ holds, 
thus $\overline{r_1r_2} = \overline{r_2r_3} = \overline{r_3r_4} = \overline{r_1r_4}$ holds.
For simplicity, we assume that the length of $\overline{r_1r_2}$ is 1.

Now we consider the triangle $\triangle{r_1r_2r_3}$. 
Due to $\overline{r_1r_2} = \overline{r_2r_3}$, triangle $\triangle{r_1r_2r_3}$ is an isosceles triangle (the base is $\overline{r_1r_3}$).
Similarly, triangle $\triangle{r_1r_3r_4}$ is also an isosceles triangle which has line $\overline{r_1r_3}$ as the base.
Line $\overline{r_1r_3}$ is the common base of these two isosceles triangles, 
thus the locations of 4 robots are as Figure \ref{twotri}.


In Figure \ref{twotri}, we consider the lengths of two diagonal lines, $\overline{r_1r_3}$ and $\overline{r_2r_4}$.
By the assumption, robot $r_1$ cannot observe $r_2$, therefore, $\overline{r_1r_3} \leq 1$ holds because robot $r_1$ observes $r_3$.
As the same reason, $\overline{r_2r_4} \leq 1$ also holds.
However, both $\overline{r_1r_3} \leq 1$ and $\overline{r_2r_4} \leq 1$ cannot hold in this rhombus,
therefore, there is no case as Figure \ref{dag}(a) and the lemma holds.
\end{proof}

By Lemma \ref{lem:unob22}, if all robots have different views, 
we have two disjoint pairs of robots such that robots in each pair cannot observe each other as in Figure \ref{dag}(b).
Now we discuss the location relations among the robots in this case by the following lemma.

\begin{lemma}
\label{lem:relation}
If all robots have different views in the distance-based (4,2)-defected model,
each of two robots which cannot be observed each other are 
diagonally located on the formed convex rectangle.
\end{lemma}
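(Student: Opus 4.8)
The plan is to combine the two preceding lemmas with a classical metric inequality for convex quadrilaterals. By Lemma~\ref{lem:convex} the four robots occupy the vertices of a convex quadrilateral; I would relabel them $q_1,q_2,q_3,q_4$ in cyclic order around this quadrilateral, so that its diagonals are $\overline{q_1q_3}$ and $\overline{q_2q_4}$. By Lemma~\ref{lem:unob22} the ``cannot observe'' relation is symmetric, and as already observed (Figure~\ref{dag}(b)) it partitions the four robots into two disjoint pairs. Relative to the cyclic labeling these two pairs are either the two diagonals $\{q_1,q_3\}$ and $\{q_2,q_4\}$---which is exactly the assertion of the lemma---or, up to a cyclic relabeling, the pair of opposite edges $\{q_1,q_2\}$ and $\{q_3,q_4\}$. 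The whole argument consists of ruling out the latter case.

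The first step is to convert ``cannot observe'' into distance inequalities. In the distance-based $(4,2)$-defected model a single robot observes its two nearest companions and misses only its farthest one, so $q_1$ not observing $q_2$ forces $\overline{q_1q_2}\ge\overline{q_1q_3}$ and $\overline{q_1q_2}\ge\overline{q_1q_4}$; by Lemma~\ref{lem:unob22} the robot $q_2$ also misses $q_1$, which adds $\overline{q_1q_2}\ge\overline{q_2q_3}$ and $\overline{q_1q_2}\ge\overline{q_2q_4}$. Applying the same reasoning to the pair $\{q_3,q_4\}$ gives $\overline{q_3q_4}\ge\overline{q_1q_3}$ and $\overline{q_3q_4}\ge\overline{q_2q_4}$. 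In particular each of the two sides $\overline{q_1q_2}$ and $\overline{q_3q_4}$ is at least as long as each of the two diagonals, whence
\[
\overline{q_1q_2}+\overline{q_3q_4}\;\ge\;\overline{q_1q_3}+\overline{q_2q_4}.
\]

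This contradicts the standard fact that in a convex quadrilateral the sum of the two diagonals is strictly larger than the sum of either pair of opposite sides: if $X$ denotes the intersection point of the two diagonals (which is interior to the quadrilateral by convexity), the triangle inequality applied to $\triangle q_1Xq_2$ and to $\triangle q_3Xq_4$ yields $\overline{q_1X}+\overline{q_2X}>\overline{q_1q_2}$ and $\overline{q_3X}+\overline{q_4X}>\overline{q_3q_4}$, and adding these gives $\overline{q_1q_3}+\overline{q_2q_4}>\overline{q_1q_2}+\overline{q_3q_4}$. Hence the ``opposite edges'' configuration is impossible, so both non-observing pairs are diagonals; that is, two robots that cannot observe each other are diagonally located on the convex quadrilateral. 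The step I expect to need the most care is the small case enumeration that reduces ``some non-observing pair is an edge'' to ``the two pairs are a pair of opposite edges'', together with keeping track of which inequalities are strict: the observation inequalities above are only non-strict because of tie-breaking, so the strictness in the convex-quadrilateral inequality is exactly what makes the contradiction go through.
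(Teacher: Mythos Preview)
Your proof is correct and takes a genuinely different route from the paper's. The paper also argues by contradiction, assuming a non-observing pair $\{r_1,r_2\}$ forms a side of the quadrilateral, but then proceeds pictorially: it normalizes $|\overline{r_1r_2}|=1$, draws unit circles $S_1,S_2$ around $r_1$ and $r_2$, places $r_3$ in $S_1\cap S_2$ (since both $r_1$ and $r_2$ must observe $r_3$), then draws three further circles of radius $a=\max(|\overline{r_1r_3}|,|\overline{r_2r_3}|)$ to pin down the region $(C_1\cap C_2)\setminus C_3$ in which $r_4$ can lie, and finally observes that any placement of $r_4$ in that region makes $r_1,r_2$ diagonal after all. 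Your argument replaces this region-chasing by a single metric fact---that in a strictly convex quadrilateral the sum of the diagonals exceeds the sum of either pair of opposite sides---together with the observation that a partition of four cyclically ordered vertices into two pairs is either the two diagonals or a pair of opposite sides. This is shorter, coordinate-free, and easier to verify than the paper's construction. The case enumeration you flag as needing care is in fact immediate: the two non-observing pairs are complementary, so if one is a side the other is automatically the opposite side. Your attention to strictness is also well placed: Lemma~\ref{lem:convex} (whose proof treats the three-collinear case as non-convex) guarantees the diagonals cross at an interior point, so the triangle inequalities you invoke are strict and the contradiction goes through even though the observation inequalities are only non-strict.
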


\begin{proof}
We already proved that the robots form a convex rectangle if all robots have different views by Lemma \ref{lem:convex}.
Let $r_1$ and $r_2$ be two robots which do not observe each other, 
and we assume for contradiction that $r_1$ and $r_2$ are not diagonally located (\ie line $\overline{r_1r_2}$ is an edge of the convex rectangle).
For simplicity, we assume the length of $\overline{r_1r_2}$ is 1.

\begin{figure}[t]
	\begin{center}
	    \includegraphics[scale=0.5]{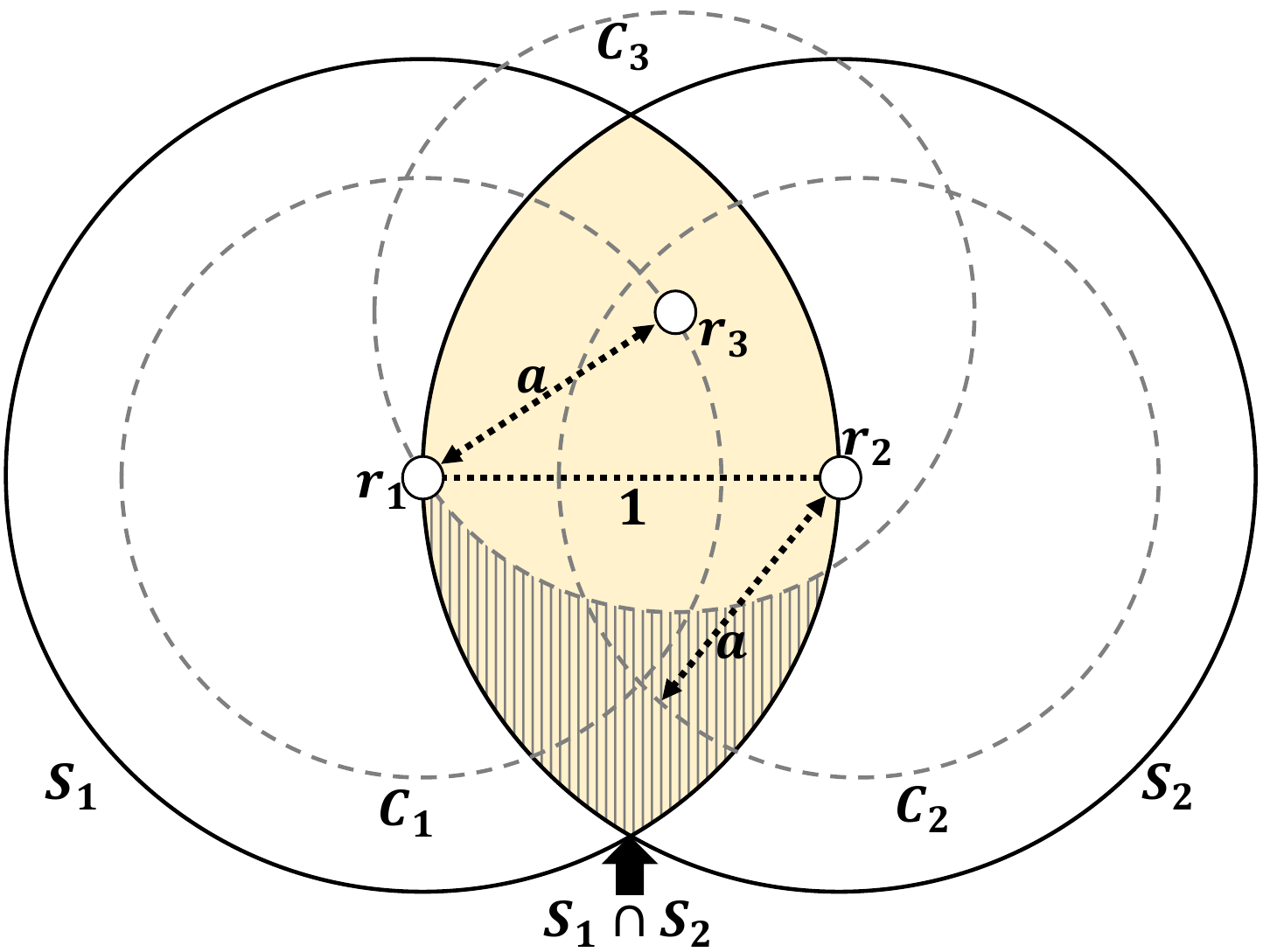}
	\end{center}
    \caption{Possible positions of robots $r_3$ and $r_4$}
    \label{twocircle1}
\end{figure}

Figure \ref{twocircle1} illustrates two circles, called $S_1$ and $S_2$, with radius 1
centered at $r_1$ and $r_2$ respectively.
Consider the position of robot $r_3$: robot $r_3$ should be located in area $S_1 \cap S_2$,
because both $r_1$ and $r_2$ observe $r_3$ (remind that $r_1$ and $r_2$ do not observe each other).
Locate $r_3$ in an arbitrary point in area $S_1 \cap S_2$. 
Let $a$ be the length of longer one between $\overline{r_1r_3}$ and $\overline{r_2r_3}$ (\ie $a = max(|\overline{r_1r_3}|,|\overline{r_2r_3}|)$), 
here we assume $a$ is the length of $\overline{r_1r_3}$ without loss of generality.
Circles $C_1$, $C_2$, and $C_3$ present the circles with radius $a$ centered at $r_1$, $r_2$, and $r_3$ respectively.
By Lemma \ref{lem:unob22}, robots $r_3$ and $r_4$ cannot observe each other, 
thus $|\overline{r_3r_4}| \geq a$ holds; robot $r_4$ should be located outside of $C_3$.
As a result, robot $r_4$ should be located in $(C_1 \cap C_2) - C_3$ 
which is presented as the shaded area in Figure \ref{twocircle1}.
In this case, robots $r_1$ and $r_2$ (resp. $r_3$ and $r_4$) are diagonally located on a convex rectangle,
which is a contradiction. Thus the lemma holds.
\end{proof}

Now we show that even when all single robots have different views, 
two or more robots move to the same point by Algorithm \ref{alg:42gather}.
We consider the 6 lines derived by the combination of 4 robots (refer to Figure \ref{fig:6lines}).
We focus on the lengths of these 6 lines, 
and the following corollary holds by Lemma \ref{lem:relation}.


\begin{figure}[b]
 \begin{minipage}{0.5\hsize}
  \begin{center}
   \includegraphics[scale=0.55]{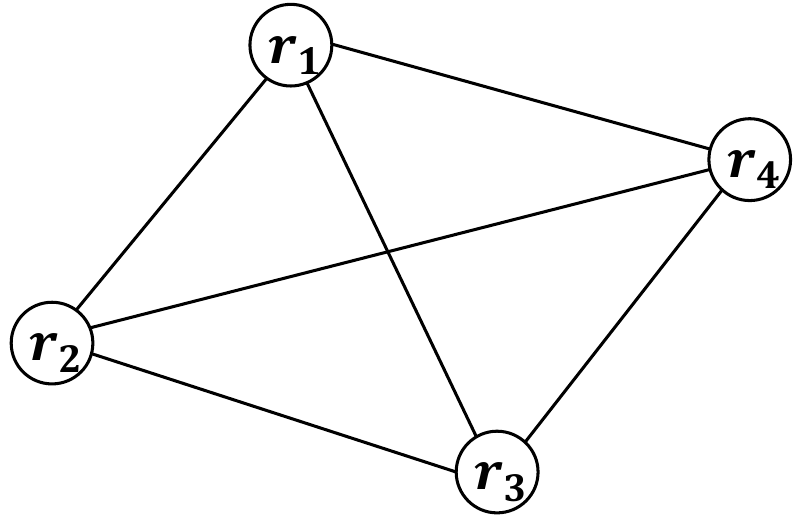}
  \end{center}
  \caption{6 lines by 4 robots}
  \label{fig:6lines}
 \end{minipage}
 \hfill
 \begin{minipage}{0.5\hsize}
	\begin{center}
	    \includegraphics[scale=0.55]{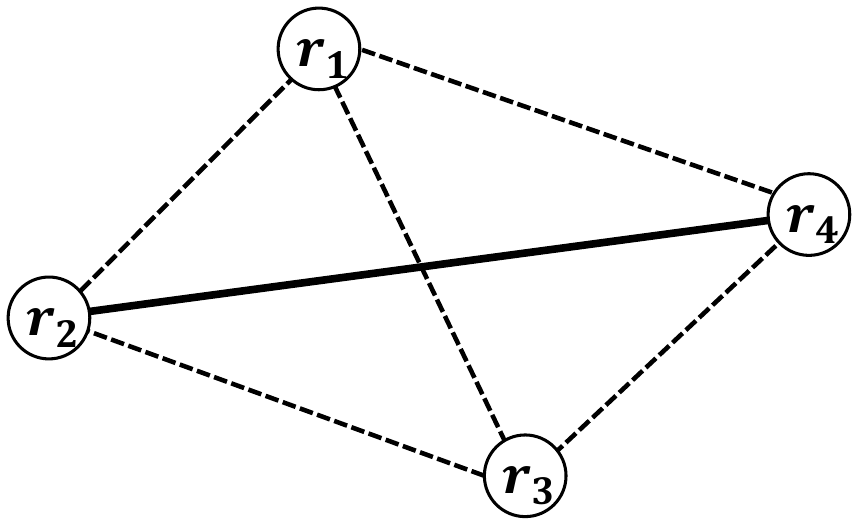}
	\end{center}
    \caption{Configuration with one longest line}
    \label{fig:1long}
 \end{minipage}
\end{figure}

\begin{corollary}
\label{col:longest}
Consider the 6 lines connecting distinct pairs of two robots.
If all robots are single and have different views, 
there is no (side) line which is longer than any diagonal line.
\end{corollary}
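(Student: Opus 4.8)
The plan is to combine Lemmas~\ref{lem:convex}, \ref{lem:unob22}, and \ref{lem:relation} with the defining property of the distance-based model, after which essentially nothing remains to compute. First I would fix notation: by Lemma~\ref{lem:convex} the four single robots (having pairwise different views) form a non-degenerate convex quadrilateral, so I relabel them $r_1, r_2, r_3, r_4$ in cyclic order along the quadrilateral. Then the six lines connecting pairs of robots split into four \emph{sides} $\overline{r_1r_2}, \overline{r_2r_3}, \overline{r_3r_4}, \overline{r_4r_1}$ and two \emph{diagonals} $\overline{r_1r_3}, \overline{r_2r_4}$, and the corollary is exactly the assertion that every side is at most as long as every diagonal. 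By Lemma~\ref{lem:unob22} the ``cannot observe'' relation is symmetric; since each single robot (when all four are single, hence at distinct points) observes exactly its two closest robots and therefore misses exactly one, this symmetric relation is a perfect matching on $\{r_1,r_2,r_3,r_4\}$, and by Lemma~\ref{lem:relation} its two edges are precisely the two diagonals $\{r_1,r_3\}$ and $\{r_2,r_4\}$.

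The key step is to turn non-observability into a length inequality. In the distance-based $(4,2)$-defected model a robot observes its two closest robots, so the robot it misses is weakly the farthest from it: if $r_i$ misses $r_j$, then $|\overline{r_ir_j}| \ge |\overline{r_ir_k}|$ for each of the other two robots $r_k$. Applying this to the diagonal $\{r_1,r_3\}$: since $r_1$ misses $r_3$ we get $|\overline{r_1r_3}| \ge |\overline{r_1r_2}|$ and $|\overline{r_1r_3}| \ge |\overline{r_1r_4}|$; since $r_3$ misses $r_1$ we get $|\overline{r_1r_3}| \ge |\overline{r_2r_3}|$ and $|\overline{r_1r_3}| \ge |\overline{r_3r_4}|$. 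Hence the diagonal $\overline{r_1r_3}$ is at least as long as all four sides. The identical argument applied to the diagonal $\{r_2,r_4\}$ shows $\overline{r_2r_4}$ is also at least as long as all four sides. Therefore no side is longer than either diagonal, which is exactly the statement of the corollary.

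I expect the only point that needs any care — the ``obstacle'', such as it is — to be the justification that a missed robot is weakly the farthest when the distances from $r_i$ to the other three robots are not all distinct, since ties among the two closest are broken arbitrarily. This is a one-line case check: if $|\overline{r_ir_j}|$ were strictly smaller than $|\overline{r_ir_k}|$ for both other robots $r_k$, then $r_j$ would be the unique closest robot and hence observed; so whenever $r_j$ is missed, $|\overline{r_ir_j}|$ is at least as large as each of the other two distances. Everything else is bookkeeping layered on top of the lemmas already proved, so the final write-up should be short.
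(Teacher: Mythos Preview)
Your proof is correct and follows exactly the route the paper intends: the corollary is stated without proof as an immediate consequence of Lemma~\ref{lem:relation}, and your argument simply unpacks that implication by combining the fact that the non-observing pairs are the two diagonals with the defining property of the distance-based model that a missed robot is weakly farthest. Your handling of ties is also fine and matches the model's arbitrary tie-breaking convention.
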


It is worthwhile to mention that there can be at most 4 longest lines among 6 lines.
We focus on the number of longest lines and show that the algorithm works correctly in all cases.
By Corollary \ref{col:longest}, if there exist one or two longest lines, they are diagonal lines.
The following lemma holds.

\begin{lemma}
\label{lem:1long}
Assume that all robots are single and have different views in the distance-based (4,2)-defected model, 
and consider the 6 lines connecting distinct pairs of two robots.
If there exist one or two longest lines, two or more robots become accompanied in one round.
\end{lemma}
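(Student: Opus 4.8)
The plan is to analyze the geometric configuration when there are one or two longest lines among the six lines connecting the four single robots, using the fact (Corollary \ref{col:longest}) that any longest line must be a diagonal of the convex rectangle formed by the robots (Lemma \ref{lem:convex}). First I would fix notation: let the four robots be placed at the corners of a convex quadrilateral, with the two diagonals $\overline{r_1 r_3}$ and $\overline{r_2 r_4}$. Since every longest line is a diagonal, the remaining four lines (the sides of the quadrilateral) are all strictly shorter than the longest. I would treat the two sub-cases separately.

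For the case of exactly one longest line, say $\overline{r_1 r_3}$ is the unique longest line, I would argue that both $r_1$ and $r_3$ observe $r_2$ and $r_4$ (each robot misses only its farthest robot, and for $r_1$ the farthest is $r_3$, while for $r_3$ the farthest is $r_1$ — here I need to be careful, since "farthest" for $r_1$ among $\{r_2,r_3,r_4\}$ is $r_3$ because $\overline{r_1 r_3}$ is the longest of the three lines emanating from $r_1$, and symmetrically for $r_3$). Hence $r_1.\opset()$ and $r_3.\opset()$ both equal $\{p_1, p_2, p_4\}$ and $\{p_2, p_3, p_4\}$ respectively — wait, these are not equal. The correct observation is that $r_1$ and $r_3$ do not have the same view, so I cannot invoke Lemma \ref{lem:sameview} directly. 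Instead I would show that at least two of the robots' destinations coincide by examining which triangle each robot sees and applying Rules 1–3. Concretely: $r_1$ sees the triangle $\{r_1, r_2, r_4\}$ and $r_3$ sees $\{r_2, r_3, r_4\}$; meanwhile $r_2$ misses its farthest robot and $r_4$ misses its farthest robot, and by Lemma \ref{lem:unob22} the unobserved pairs are the two diagonals, so $r_2$ misses $r_4$ and $r_4$ misses $r_2$ — but then $r_2$ and $r_4$ would need $\overline{r_2 r_4}$ to also be a longest line, contradicting uniqueness unless the structure forces it. So in the one-longest-line case, the non-diagonal unobserved pair cannot be a mutual miss, which by Lemma \ref{lem:unob22} means not all robots have different views — contradiction. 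Therefore the one-longest-line case with all-different-views is actually vacuous, OR I have mislabeled which robot misses which; I would re-derive carefully and in fact conclude that with exactly one longest diagonal, the two endpoints of the *other* diagonal $r_2, r_4$ have $\overline{r_2 r_4}$ as their mutual longest edge only if it too is longest. The cleanest route: one longest line forces $r_2$ and $r_4$ to each observe the full configuration or to share a view, yielding two robots with identical $\opset()$, hence a multiplicity point after one round by Lemma \ref{lem:sameview}.

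For the case of exactly two longest lines, by Corollary \ref{col:longest} these are the two diagonals $\overline{r_1 r_3}$ and $\overline{r_2 r_4}$, which are equal in length and longer than all four sides. Then $r_1$ misses $r_3$, $r_3$ misses $r_1$, $r_2$ misses $r_4$, $r_4$ misses $r_2$ (each robot's farthest among the other three is its diagonal partner). Thus $r_1.\opset() = \{p_1, p_2, p_4\} = r_3.\opset()$? No — $r_1$ sees $\{p_1,p_2,p_4\}$ and $r_3$ sees $\{p_2,p_3,p_4\}$; these differ. But $r_1$ sees triangle $r_1 r_2 r_4$ and $r_3$ sees triangle $r_3 r_2 r_4$; these two triangles share the edge $\overline{r_2 r_4}$, which is a longest line in both (since it is a diagonal, hence $\ge$ every side). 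So in triangle $r_1 r_2 r_4$, the longest side is $\overline{r_2 r_4}$ (it beats sides $\overline{r_1 r_2}$ and $\overline{r_1 r_4}$ which are quadrilateral sides), and similarly in triangle $r_3 r_2 r_4$. If these triangles fall under Rule 3, both $r_1$ and $r_3$ move to the midpoint of $\overline{r_2 r_4}$, making them accompanied. I would then check Rules 1 and 2: if triangle $r_1 r_2 r_4$ is equilateral then all its sides equal $\overline{r_2 r_4}$, so $\overline{r_1 r_2} = \overline{r_2 r_4}$, forcing a third longest line, contradicting "exactly two"; if it is isosceles with base $\overline{r_2 r_4}$, then again $r_1$ moves to the midpoint of $\overline{r_2 r_4}$ — good — and if isosceles with a different base, say base $\overline{r_1 r_2}$, then $\overline{r_1 r_4} = \overline{r_2 r_4}$ gives a third longest line, contradiction again. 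So in the two-longest-line case $r_1$ and $r_3$ both move to the midpoint of $\overline{r_2 r_4}$ (by Rule 2 or Rule 3), hence become accompanied.

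The main obstacle I anticipate is the careful bookkeeping in the one-longest-line case: verifying that the all-different-views hypothesis is either vacuous there or still forces two coincident destinations, and making sure the "farthest robot" identification is airtight when some side lengths tie with each other (though not with the diagonal). I would handle ties among sides by noting they are irrelevant to which robot is *missed* (only the maximum matters), and I would handle the boundary between Rules 1/2/3 by using the "exactly one/two longest lines" count to rule out the degenerate equilateral and wrong-base isosceles sub-cases, exactly as sketched above.
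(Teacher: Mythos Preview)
Your treatment of the two-longest-line case is correct and is exactly the paper's argument (up to relabelling): the two robots on one diagonal each see a triangle containing the other diagonal $\overline{r_2r_4}$ as its strictly longest side, the equilateral and wrong-base isosceles sub-cases are ruled out by the count of longest lines, and so both robots move to the midpoint of $\overline{r_2r_4}$.

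Your one-longest-line case, however, contains a genuine error. You conclude that this case is vacuous under the all-different-views hypothesis (or alternatively that it forces two robots to share a view), and both conclusions are false. The mistake is the inference ``$r_2$ misses $r_4$ $\Rightarrow$ $\overline{r_2r_4}$ is a globally longest line''. In the distance-based model, $r_2$ missing $r_4$ only means $\overline{r_2r_4}$ is the longest among the three segments \emph{incident to} $r_2$; it says nothing about $\overline{r_1r_3}$. A concrete configuration such as $r_1=(0,0)$, $r_2=(2,3)$, $r_3=(5,2)$, $r_4=(3,-1)$ is convex, has all four views different, has each robot missing precisely its diagonal partner, and yet has $\overline{r_1r_3}$ as the \emph{unique} longest of the six segments. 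So the case is not vacuous.

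The fix is simply to apply your own two-longest-line argument here as well, with the roles swapped. If $\overline{r_1r_3}$ is the unique longest line, look at $r_2$ and $r_4$ (the endpoints of the \emph{other} diagonal). By Lemmas~\ref{lem:unob22} and~\ref{lem:relation} they miss each other, so $r_2$ sees $\triangle r_1r_2r_3$ and $r_4$ sees $\triangle r_1r_3r_4$; in both triangles $\overline{r_1r_3}$ is the strictly longest side (uniqueness rules out equilateral and wrong-base isosceles exactly as before), and Rules~2/3 send both robots to the midpoint of $\overline{r_1r_3}$. This is precisely what the paper does: it picks the longest diagonal, then tracks the two robots on the other diagonal.
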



\begin{proof}
Figure \ref{fig:1long} illustrates an example configuration including the only one longest line (as a diagonal line), 
where the thick solid line represents the unique longest line.
Without loss of generality, we assume that line $\overline{r_2r_4}$ is the longest one.
From the assumption, $r_1$ and $r_3$ do not observe each other: 
$r_1$ observes triangle $\triangle{r_1r_2r_4}$, and $r_3$ observes triangle $\triangle{r_2r_3r_4}$.
These two triangles are not equilateral triangles because line $\overline{r_2r_4}$ is the unique longest line.
Therefore, robots $r_1$ and $r_3$ move to the midpoint of line $\overline{r_2r_4}$ (by line 9 or 11).
If there are two longest lines, the both lines are diagonal lines by Corollary \ref{col:longest} 
($\overline{r_1r_3}$ and $\overline{r_2r_4}$ in Figure \ref{fig:1long}).
However, this does not affect to the actions of robots $r_1$ and $r_3$; they move to the midpoint of line $\overline{r_2r_3}$.
Thus the lemma holds.
\end{proof}


Now we consider the case that 
there is a side line whose length is the same as two diagonal lines;
there are three longest lines.

\begin{lemma}
\label{lem:3long}
Assume that all robots are single and have different views in the distance-based (4,2)-defected model, 
and consider the 6 lines connecting distinct pairs of two robots.
If there are the three longest lines, two or more robots become accompanied in two rounds.
\end{lemma}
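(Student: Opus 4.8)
The plan is to trace the algorithm for two rounds and then reduce to Lemmas~\ref{lem:sameview} and~\ref{lem:1long}. First I would fix notation. By Corollary~\ref{col:longest} each diagonal of the convex quadrilateral is at least as long as each side, which forces the three longest lines to be the two diagonals together with exactly one side. By Lemmas~\ref{lem:unob22} and~\ref{lem:relation} the two mutually--unobserved pairs are the two diagonal pairs; I label the robots so that these pairs are $\{r_1,r_3\}$ and $\{r_2,r_4\}$, so the cyclic order around the quadrilateral is $r_1r_2r_3r_4$, and by the cyclic symmetry among its four sides I may further assume the unique longest side is $\overline{r_1r_2}$. Then $|\overline{r_1r_2}|=|\overline{r_1r_3}|=|\overline{r_2r_4}|$, which I normalize to $1$, while $|\overline{r_2r_3}|,|\overline{r_3r_4}|,|\overline{r_4r_1}|<1$; moreover $r_1$ observes $\triangle r_1r_2r_4$, $r_2$ observes $\triangle r_1r_2r_3$, $r_3$ observes $\triangle r_2r_3r_4$ and $r_4$ observes $\triangle r_1r_3r_4$.

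Next I would compute the first round. Because the three short lines are strictly shorter than $1$, none of the four observed triangles is equilateral, so Rule~1 never applies. The triangle $\triangle r_1r_2r_4$ has two sides of length $1$, hence is isosceles with base $\overline{r_1r_4}$, and Rule~2 sends $r_1$ to the midpoint $m_{14}$ of $\overline{r_1r_4}$; symmetrically $r_2$ moves to the midpoint $m_{23}$ of $\overline{r_2r_3}$. In $\triangle r_2r_3r_4$ the side $\overline{r_2r_4}$ of length $1$ is its unique longest line, so regardless of whether that triangle is isosceles (with base $\overline{r_2r_4}$), scalene, or has collinear vertices, Rules~2 and~3 send $r_3$ to the midpoint $m_{24}$ of $\overline{r_2r_4}$; symmetrically $r_4$ moves to the midpoint $m_{13}$ of $\overline{r_1r_3}$. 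Thus after one round $r_1,r_2,r_3,r_4$ sit at $m_{14},m_{23},m_{24},m_{13}$. From $m_{14}+m_{23}=m_{13}+m_{24}=\frac12(r_1+r_2+r_3+r_4)$ these four points are the vertices of a parallelogram whose sides have lengths $|\overline{r_1r_2}|/2=1/2$ and $|\overline{r_3r_4}|/2<1/2$, and since $|\overline{r_3r_4}|\neq1$ while $r_1\neq r_2$ and $r_3\neq r_4$ the four points are pairwise distinct, so all robots are still single.

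Then I would analyse the second round in two cases. If after the first round two of the robots have the same view, Lemma~\ref{lem:sameview} makes them move to a common point, producing two accompanied robots; this in particular disposes of the degenerate case $\overline{r_1r_2}\parallel\overline{r_3r_4}$, in which the four points are collinear and a short computation shows two robots miss the same robot and hence share a view. Otherwise all four robots have distinct views, so by Lemma~\ref{lem:convex} they form a convex, hence non--degenerate, parallelogram; the parallelogram law gives $D_1^2+D_2^2=\frac12\bigl(1+|\overline{r_3r_4}|^2\bigr)$ for its diagonals $D_1,D_2$, so $\max(D_1,D_2)^2\ge\frac14\bigl(1+|\overline{r_3r_4}|^2\bigr)>\frac14$, which exceeds the square $\frac14$ of the longest side. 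Hence the longest among the six lines is a diagonal and there are at most two longest lines, so by Lemma~\ref{lem:1long} two or more robots become accompanied after one further round. In every case two or more robots are accompanied within two rounds.

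The step I expect to be the main obstacle is the first--round destination computation. One must verify that Rule~1 is never triggered --- this is precisely where the hypothesis of \emph{exactly} three longest lines is needed, since it forces $|\overline{r_2r_3}|,|\overline{r_3r_4}|,|\overline{r_4r_1}|<1$ strictly and thus excludes a fourth longest line --- and one must treat the isosceles/scalene/collinear sub--cases of the triangles seen by $r_3$ and $r_4$ uniformly, as well as the collinear degeneracy of the post--first--round parallelogram. Once that configuration is identified as a parallelogram (or a collinear quadruple with a repeated view), the remaining argument is a direct appeal to the already established Lemmas~\ref{lem:sameview}, \ref{lem:convex} and~\ref{lem:1long}.
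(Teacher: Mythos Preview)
Your proof is correct and follows essentially the same route as the paper's: identify the three longest lines as the two diagonals plus one side, compute the four first-round destinations as the midpoints $m_{14},m_{23},m_{24},m_{13}$, recognise the resulting configuration as a parallelogram, and then invoke Lemma~\ref{lem:1long}. Your treatment is in fact more careful than the paper's in two places --- you justify via the parallelogram law that the longest of the six new lines is a diagonal (the paper simply asserts that $\overline{p_1p_2}$ is the unique longest line), and you explicitly handle the degenerate case $\overline{r_1r_2}\parallel\overline{r_3r_4}$ where the four midpoints are collinear (which could also be dispatched by citing Lemma~\ref{lem:convex} directly).
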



\begin{proof}
Figure \ref{fig:3long} illustrates the only configuration including three longest lines.
Three thick solid lines are the three longest lines.
Remind that robots $r_1$ and $r_3$ (or $r_2$ and $r_4$) cannot observe each other.
By Algorithm \ref{alg:42gather}, 
all robots move to the different points:
robot $r_1$ (resp. $r_2$) moves to the midpoint $p_1$ (resp. $p_2$) of line $\overline{r_1r_4}$ (resp. $\overline{r_2r_3}$) since $r_1$ (resp. $r_2$) observes an isosceles triangle $\triangle{r_1r_2r_4}$ (resp. $\triangle{r_1r_2r_3}$).
Robot $r_3$ (resp. $r_4$) moves to the midpoint $p_3$ (resp. $p_4$) of line $\overline{r_2r_4}$ (resp. $\overline{r_1r_3}$) that is the longest line of the observed triangle $\triangle{r_2r_3r_4}$ (resp. $\triangle{r_1r_3r_4}$).
In this case, 
triangles $\triangle{r_2r_3r_4}$ and $\triangle{r_2p_2p_3}$ are similar, 
the length of line $\overline{p_2p_3}$ is half of the length of line $\overline{r_3r_4}$, 
and line $\overline{p_2p_3}$ and line $\overline{r_3r_4}$ are parallel.
Through the same argument for lines $\overline{p_1p_4}$ and $\overline{r_4r_3}$, we can show that the lengths of lines $\overline{p_1p_4}$ and  $\overline{p_2p_3}$ are the same
and these two lines are parallel.
This means that the rectangle formed in the next round is a parallelogram:
even if all robots have different views in this configuration,
two or more robots become accompanied in the next round
because diagonal line $\overline{p_1p_2}$ is the unique longest line (by Lemma \ref{lem:1long}).
\end{proof}

\begin{figure}[t]
 \begin{minipage}{0.5\hsize}
	\begin{center}
	    \includegraphics[scale=0.5]{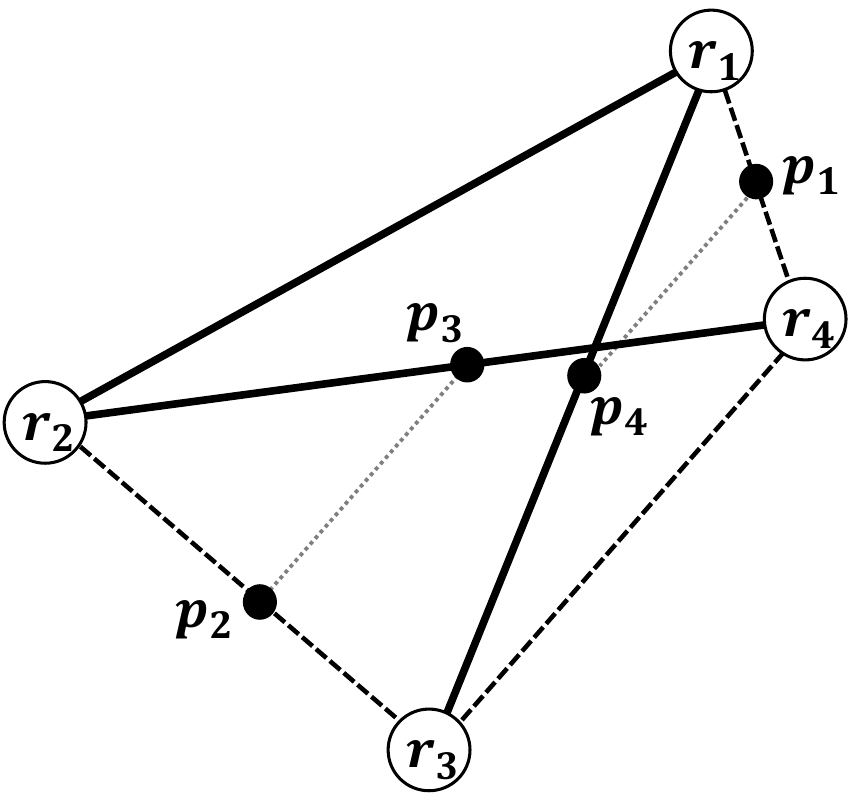}
	\end{center}
    \caption{Configuration with three longest lines}
    \label{fig:3long}
 \end{minipage}
 \hfill
 \begin{minipage}{0.5\hsize}
	\begin{center}
	    \includegraphics[scale=0.46]{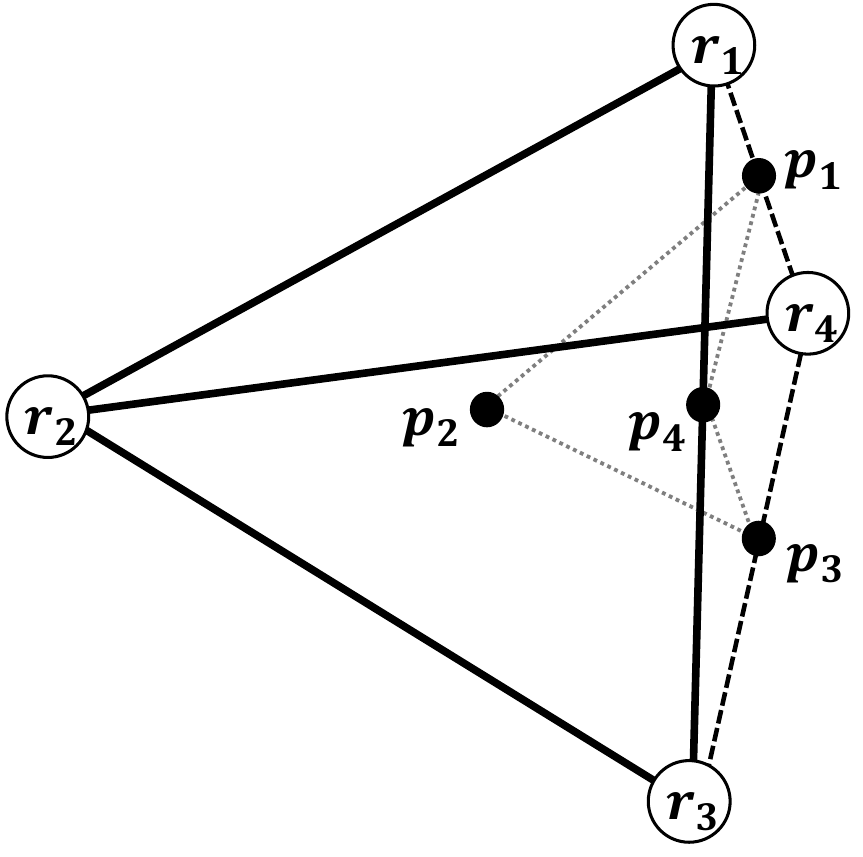}
	\end{center}
    \caption{Configuration with four longest lines}
    \label{fig:4long}
 \end{minipage}
\end{figure}

\begin{lemma}
\label{lem:4long}
Assume that all robots are single and have different views in the distance-based (4,2)-defected model, 
and consider the 6 lines connecting distinct pairs of two robots.
If there are four longest lines, two or more robots become accompanied in two rounds.
\end{lemma}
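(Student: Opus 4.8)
The plan is to identify the essentially unique family of configurations having exactly four longest lines, trace one round of Algorithm~\ref{alg:42gather}, and then close the argument using Lemma~\ref{lem:convex} and Lemma~\ref{lem:sameview}.

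First I would pin down the configuration. By Corollary~\ref{col:longest} no side is longer than any diagonal, so as soon as one of the four longest lines is a side, both diagonals also have the maximal length $L$; since there are only two diagonals, four longest lines means that the two diagonals and exactly two of the four sides have length $L$, the remaining two sides being strictly shorter. The two long sides cannot be opposite: if, say, the opposite sides $\overline{r_1r_2}$ and $\overline{r_3r_4}$ of the convex quadrilateral $r_1r_2r_3r_4$ both had length $L$, then together with $|\overline{r_1r_3}|=|\overline{r_2r_4}|=L$ we would have $|\overline{r_1r_2}|=|\overline{r_1r_3}|=L$ and $|\overline{r_2r_4}|=|\overline{r_3r_4}|=L$, so both $r_1$ and $r_4$ would lie on the perpendicular bisector of $\overline{r_2r_3}$ at distance $L$ from $r_2$; there are only two such points and they lie on opposite sides of the line through $r_2$ and $r_3$, whereas in a convex quadrilateral $r_1r_2r_3r_4$ the vertices $r_1$ and $r_4$ must lie on the same side of the edge $\overline{r_2r_3}$ --- a contradiction (the other pair of opposite sides is ruled out in the same way). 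Hence the two long sides are adjacent; relabeling so that they meet at $r_2$, we obtain $|\overline{r_1r_2}|=|\overline{r_2r_3}|=|\overline{r_1r_3}|=L$, so $\triangle r_1r_2r_3$ is equilateral, $|\overline{r_2r_4}|=L$, $|\overline{r_1r_4}|,|\overline{r_3r_4}|<L$, and $r_1r_2r_3r_4$ is convex; this is exactly the family depicted in Figure~\ref{fig:4long}. By Lemma~\ref{lem:relation} the two mutually-unobserving pairs are the diagonals $\{r_1,r_3\}$ and $\{r_2,r_4\}$, so each robot observes the triangle formed by the three robots other than its diagonal partner.

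Next I would read off the first-round destinations from lines~6--11 of Algorithm~\ref{alg:42gather}. Robot $r_2$ observes the equilateral triangle $\triangle r_1r_2r_3$ and so moves to its center $g$ (Rule~1). Robot $r_4$ observes $\triangle r_1r_3r_4$, whose uniquely longest side is $\overline{r_1r_3}$ since $|\overline{r_1r_4}|,|\overline{r_3r_4}|<L=|\overline{r_1r_3}|$, and so moves to the midpoint $m_{13}$ of $\overline{r_1r_3}$ (Rule~2 or Rule~3). Robot $r_1$ (resp.\ $r_3$) observes the isosceles triangle $\triangle r_1r_2r_4$ (resp.\ $\triangle r_2r_3r_4$) with apex $r_2$ and base $\overline{r_1r_4}$ (resp.\ $\overline{r_3r_4}$), and so moves to the midpoint $m_{14}$ of $\overline{r_1r_4}$ (resp.\ $m_{34}$ of $\overline{r_3r_4}$) (Rule~2).

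Finally, if two of $g,m_{13},m_{14},m_{34}$ coincide then two robots are already accompanied after one round; otherwise I would show that $m_{13}$ (the image of $r_4$) lies strictly inside the triangle with vertices $g,m_{14},m_{34}$ (the images of $r_2,r_1,r_3$). Placing $r_2$ at the origin with $r_1,r_3$ at polar angles $\mp 30^{\circ}$ and $r_4$ at angle $\alpha\in(-30^{\circ},30^{\circ})$ on the circle of radius $L=1$, the points $m_{14}$ and $m_{34}$ share the abscissa $x_0=(\sqrt3/2+\cos\alpha)/2>\sqrt3/2$ and have ordinates of opposite sign, whereas $g=(1/\sqrt3,0)$ and $m_{13}=(\sqrt3/2,0)$ lie on the $x$-axis with $1/\sqrt3<\sqrt3/2<x_0$; hence $m_{13}$ lies strictly between the vertex $g$ and the vertical edge $\overline{m_{14}m_{34}}$, so it is interior to the triangle. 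Thus, after one round, the four robots are distinct single robots that do not form a convex rectangle (one lies strictly inside the triangle of the other three). By the contrapositive of Lemma~\ref{lem:convex} two of them then have the same view, and by Lemma~\ref{lem:sameview} those two move to a common point in the next round. Therefore two or more robots become accompanied within two rounds.

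The step I expect to be the main obstacle is the first one: showing that four longest lines force precisely the equilateral-triangle-plus-arc family (in particular, ruling out opposite long sides) and verifying the strictly-inside claim in the last step. Once the configuration and the four first-round destinations are known, the conclusion is immediate from Lemmas~\ref{lem:convex} and~\ref{lem:sameview}, just as the cases of one, two, or three longest lines reduce to the earlier lemmas.
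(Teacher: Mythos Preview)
Your proof is correct and follows essentially the same approach as the paper: identify the (essentially unique) configuration family, compute the four first-round destinations, observe that the resulting quadrilateral is non-convex, and conclude via Lemma~\ref{lem:convex} (and Lemma~\ref{lem:sameview}). The paper's own proof is terser---it simply asserts that Figure~\ref{fig:4long} is ``the only configuration'' and that the four image points ``form a concave rectangle''---so your explicit justification of these two steps (ruling out two opposite long sides, and the coordinate check that $m_{13}$ lies strictly inside $\triangle g\,m_{14}\,m_{34}$) fills in details the paper leaves to the reader.
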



\begin{proof}
Figure \ref{fig:4long} illustrates the only configuration including four longest lines.
Four thick solid lines are the four longest lines.
By Algorithm \ref{alg:42gather}, 
all robots move to the different points:
robot $r_1$ (resp. $r_3$) moves to the midpoint $p_1$ (resp. $p_3$) of line $\overline{r_1r_4}$ (resp. $\overline{r_3r_4}$) since $r_1$ (resp. $r_3$) observes an isosceles triangle $\triangle{r_1r_2r_4}$ (resp. $\triangle{r_2r_3r_4}$). 
Robot $r_2$ moves to the center $p_2$ of the equilateral triangle $\triangle{r_1r_2r_3}$ it observes,  
and $r_4$ moves to the midpoint $p_4$ of the unique longest line $\overline{r_1r_3}$ it observes 
(note that if $|\overline{r_1r_4}|=|\overline{r_3r_4}|$, triangle $\triangle{r_1r_3r_4}$ is an isosceles triangle, 
however robot $r_4$ moves to the midpoint $p_4$ of the base line also in this case).
As a result, the four points, from $p_1$ to $p_4$, form a concave rectangle.
Hence, two or more robots become accompanied in the next round by Lemma \ref{lem:convex}.
\end{proof}

From Lemmas \ref{lemma:3ac}, \ref{lemma:2acin4}, \ref{lem:1long}, \ref{lem:3long} and \ref{lem:4long}, 
the following theorem holds.

\begin{theorem}
In the distance-based (4, 2)-defected model, Algorithm \ref{alg:42gather} solves the gathering problem in at most four rounds. \qed
\end{theorem}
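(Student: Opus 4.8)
The plan is to argue by a case analysis on the number of accompanied robots in the current configuration, reducing every case to a configuration containing at least two accompanied robots within at most two rounds, and then invoking the already‑established lemmas for such configurations. The key observations I would use are: Lemma~\ref{lemma:3ac} (three or more accompanied robots), Lemma~\ref{lemma:2acin4} (exactly two accompanied robots), Lemma~\ref{lem:sameview} (robots sharing a view move together), Lemma~\ref{lem:convex} and Corollary~\ref{col:longest} (shape constraints when all views differ), and Lemmas~\ref{lem:1long}, \ref{lem:3long}, \ref{lem:4long} (how quickly two robots become accompanied, according to the number of longest lines).

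First, if three or four robots are accompanied, then by Lemma~\ref{lemma:3ac} (as already noted, specialized to $N=4$: either all robots coincide, or the remaining single robot sees the crowded point as a multiplicity and moves there) the gathering is achieved in at most one round. If exactly two robots are accompanied, Lemma~\ref{lemma:2acin4} gives gathering in at most two rounds. A configuration with exactly one accompanied robot is impossible, so the remaining case is that all four robots are single. If two of them have the same view, Lemma~\ref{lem:sameview} makes those two move to a common point, so after one round there are at least two accompanied robots and the previous sentences finish the gathering in at most two more rounds. Otherwise all four single robots have pairwise distinct views; by Lemma~\ref{lem:convex} they form a convex quadrilateral, and by Corollary~\ref{col:longest} no side line is strictly longer than a diagonal, so — since at most four of the six connecting lines can be simultaneously longest — the number of longest lines is $1$, $2$, $3$, or $4$. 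In the first two cases Lemma~\ref{lem:1long} yields two accompanied robots after one round; in the third case Lemma~\ref{lem:3long} yields this after two rounds; in the fourth case Lemma~\ref{lem:4long} does as well. Thus in every sub‑case we reach, within at most two rounds, a configuration with at least two accompanied robots, which by the earlier cases is resolved in at most two further rounds. Finally, once all robots occupy one point, each robot's $\opset()$ is the singleton of that point with $\isMulti$ equal to $\TRUE$, so line~2 returns that same point and nobody moves; the gathered configuration is stable. Adding up the worst chain — two rounds to create two accompanied robots from a three‑ or four‑longest‑line configuration, then two rounds from the two‑accompanied configuration — gives the claimed bound of four rounds.

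The main obstacle is making sure the case analysis is genuinely exhaustive and that the round counts compose to at most four and not more. Concretely, one must verify that the intermediate configurations produced after the first round in Lemmas~\ref{lem:3long} and \ref{lem:4long} (a parallelogram, respectively a concave quadrilateral) do fall back under one of the handled cases, that the ``number of longest lines $\in\{1,2,3,4\}$'' dichotomy truly covers every convex quadrilateral satisfying Corollary~\ref{col:longest}, and that the tightest chain of reductions sums to exactly four. Once these are checked, the theorem follows immediately from the stated lemmas.
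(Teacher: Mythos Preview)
Your proposal is correct and follows essentially the same approach as the paper: the paper's proof is nothing more than the sentence ``From Lemmas~\ref{lemma:3ac}, \ref{lemma:2acin4}, \ref{lem:1long}, \ref{lem:3long} and \ref{lem:4long}, the following theorem holds,'' and your case analysis is precisely the unpacking of that sentence, together with the explicit round accounting needed to justify the bound of four.
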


\section{Impossibility Results}
\label{sec:impos}
In this section, we present two impossibility results for the gathering problem in the defected view model;
(1) there is no (deterministic) algorithm in the arbitrary or distance-based (3,1)-defected model, 
and 
(2) there is no (deterministic) algorithm in the relaxed adversarial ($N$,$N-2$)-defected model defined in Section \ref{sec:relaxed}.

\subsection{Impossibility in (3,1)-defected model}
By two gathering algorithms we introduced in the previous sections, 
the gathering can be achieved in the adversarial (and thus also in the distance-based) 
($N$,$N-2$)-defected model for $N \geq 5$, 
and in the distance-based (4,2)-defected model.
These results bring us a problem to find an algorithm to solve the gathering problem 
in the distance-based (or adversarial) (3,1)-defected model.
Here we show that there is no such algorithm.

\begin{theorem}
There is no (deterministic) algorithm to solve the gathering problem 
in the distance-based or adversarial (3,1)-defected model.
\end{theorem}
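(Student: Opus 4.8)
The plan is to show, for any deterministic algorithm $\mathcal{A}$, one configuration together with a schedule of observations that is legal in \emph{both} the distance-based and the adversarial $(3,1)$-defected model, from which $\mathcal{A}$ never achieves gathering; since a single bad execution rules out $\mathcal{A}$ in either model, this suffices. For $N=3$ we have $k=1$, so whenever all three robots are single, each robot observes exactly one of the other two and obtains no further information (no multiplicity, and it only knows that it is single). I would place $r_1,r_2,r_3$ at the vertices $A,B,C$ of an equilateral triangle with centroid $O$, let $\rho$ denote the $120^{\circ}$ rotation about $O$ with $\rho(A)=B$, $\rho(B)=C$, $\rho(C)=A$, and fix the local coordinate systems so that $r_2$'s (resp. $r_3$'s) frame is obtained from $r_1$'s by the rotation $\rho$ (resp. $\rho^2$). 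Then the whole instance --- positions \emph{and} coordinate systems --- is invariant under $\rho$ composed with the relabeling $r_1\mapsto r_2\mapsto r_3\mapsto r_1$. Finally I let the adversary make each robot observe ``the next'' one: $r_1$ observes $r_2$, $r_2$ observes $r_3$, $r_3$ observes $r_1$. This is immediate in the adversarial model, and in the distance-based model it is a valid tie-break because each robot's two candidate robots are equidistant.

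The argument then proceeds in three steps. First, by construction every robot $r_i$ sees one robot at the \emph{same} local coordinates, so running the same $\mathcal{A}$ each $r_i$ computes the same destination in its own frame; since the frames are related by $\rho$, the three resulting moves are related by $\rho$, hence the next configuration is again invariant under $\rho$ and the relabeling, and in particular its centroid is still $O$ (because $I+M+M^{2}=0$ for the rotation matrix $M$ of $\rho$). Second, a configuration of three robots invariant under a $120^{\circ}$ rotation about $O$ is either ``all robots at $O$'' or a non-degenerate equilateral triangle centered at $O$; thus the only way gathering could occur in this execution is if every robot jumps to $O$ in some round. Third --- and this is the crucial point --- when a robot observes exactly one other robot, the two-point set consisting of its own point and the observed robot's point is symmetric under the reflection across the line through these two points, and because the robots share no chirality the algorithm is equivariant under reflections; hence the destination must lie on that line. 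Therefore $r_1$ moves to a point of the line $\overline{AB}$, which is never $O$ since $O$ lies strictly inside the triangle. Consequently the configuration after one round is again a non-degenerate equilateral triangle centered at $O$, and the (oblivious) robots retain their $\rho$-related frames, so the adversary can repeat the same cyclic observation and the configuration remains a non-degenerate equilateral triangle in every round. Hence $\mathcal{A}$ never gathers the robots, which proves the theorem.

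I expect the main obstacle to be making the third step rigorous inside the paper's model: one must argue cleanly that the absence of a common chirality forces the destination of a robot that sees a single other robot onto the line joining them, and, together with this, that the full symmetry of the instance --- positions, local coordinate systems, and the adversary's observation choices simultaneously --- is preserved from round to round. The purely geometric facts used along the way (a $120^{\circ}$-rotation-invariant triple of points is an equilateral triangle or a single point; the centroid is fixed by the move; the centroid never lies on a side) are routine. A secondary point worth stating explicitly is that termination does not rescue $\mathcal{A}$: if in some round no robot moves, the configuration simply stays a non-degenerate equilateral triangle, which is still not a gathered configuration.
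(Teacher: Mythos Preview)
Your overall strategy---the equilateral triangle, the cyclic observation pattern, and the $120^{\circ}$ rotational symmetry forcing the next configuration to be again an equilateral triangle centered at $O$ (or the single point $O$)---is exactly what the paper does. The difficulty is entirely in your Step~3, and there the argument has a genuine gap.

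You assert that ``because the robots share no chirality the algorithm is equivariant under reflections,'' and conclude that each robot's destination must lie on the line through itself and the single robot it observes. This inference is not valid in the paper's model. Each robot has its own local coordinate system, with a definite handedness, chosen by the adversary but then fixed; the algorithm is simply a function from local coordinates to a local destination, and nothing forces $f(\sigma(p))=\sigma(f(p))$ for a reflection $\sigma$. In particular, in the specific execution you construct (all three frames with the same handedness, related by $\rho$), a deterministic algorithm can perfectly well send every robot to $O$ in one round: take $f$ to rotate the observed point by $-60^{\circ}$ about the origin in local coordinates and scale by $1/\sqrt{3}$. With your frames this lands $r_1$ at $O$, and by your own $\rho$-symmetry so do $r_2$ and $r_3$. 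Hence your single execution does not defeat every $\mathcal{A}$, and the conclusion ``$r_1$ moves to a point of the line $\overline{AB}$'' is false in general.

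What the absence of common chirality actually buys you---and what the paper uses---is that the adversary may \emph{choose} the handedness. The paper does not argue that the destination must lie on $\overline{AB}$; instead it observes that if, with one chirality, $\mathcal{A}$ sends $r_1$ to the center $p_c$ (i.e., ``$30^{\circ}$ clockwise'' from the observed robot at distance $|\overline{r_1r_2}|/\sqrt{3}$), then with the opposite chirality the very same $\mathcal{A}$ sends $r_1$ to the reflection $p_c'$ of $p_c$ across $\overline{AB}$, which lies outside the triangle. In at least one of the two chirality choices the robots therefore do not land at the center, and the adversary picks that one; the resulting configuration is again an equilateral triangle and the argument repeats. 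Your Step~3 can be repaired by exactly this two-scenario argument; what cannot be salvaged is the claim that in a single fixed execution the destination is forced onto $\overline{AB}$.
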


\begin{figure}[t]
  \centering
		\includegraphics[scale=0.8]{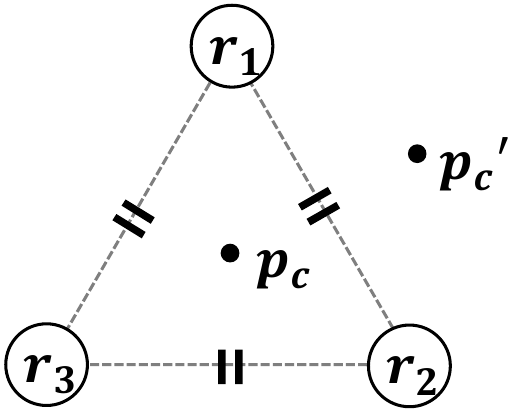}
  \caption{Example for an unsolvable configuration in the distance-based (3,1)-defected model}
  \label{impos_eqtri}
\end{figure}

\begin{proof}
We prove this theorem by showing that there is no (deterministic) algorithm even in the distance-based (3,1)-defected model.
Note that the distance-based (3,1)-defected model is stronger than the adversarial one, 
hence if the gathering cannot be achieved in the distance-based one, the gathering is also unsolvable in the adversarial one.
Assume that three robots, $R=\{r_1, r_2, r_3\}$, are arranged in an equilateral triangle as Figure \ref{impos_eqtri},
and robot $r_1$ (resp. $r_2$ and $r_3$) observes $r_2$ (resp. $r_3$ and $r_1$).
All robots do not agree on any geometrical agreement 
(\eg direction, orientation, chirality, or unit distance),
thus we can assume that every robot $r_i$ considers the direction to the center of the triangle ($p_c$) from itself 
(\ie $\overrightarrow{r_ip_c}$) as the positive direction of $X$-axis 
in its local coordinate system. 
Moreover, we also assume that all robots have the same chirality (\eg clockwise) and the same unit distance.
This means that all robots have the exactly same result of \LL~operation.

Let $\mathcal{A}$ be an algorithm for gathering in the distance-based (3,1)-defected model.
In the above configuration, 
all robots have the same observation results, 
thus they execute the same behaviors according to $\mathcal{A}$
(\ie all robots move to the same $x$ and $y$ coordinates in their local coordinate systems).
This causes another configuration forming a different equilateral triangle, which shows 
by repeating the argument that the robots cannot gather at the same point forever.
The only way to prevent the robots from forming another equilateral triangle is to move to point $p_c$, 
\ie each robot moves to the point located at $|\overline{r_ir_j}| / \sqrt{3}$ 
distance in the $30^\circ$ clockwise direction of the observed robot $r_j$.
However, if all robots agree on the opposite direction of chirality 
(counter-clockwise in this case),
they move to the outside of triangle $\triangle{r_1r_2r_3}$
(\ie robot $r_1$ moves to point $p_c'$ instead of $p_c$).
As a result, the robots form another equilateral triangle.
\end{proof}

\subsection{Impossibility in the relaxed adversarial ($N$,$N-2$)-defected model}
\label{sec:relaxed}

The ($N$, $k$)-defected model assumes that $k$ robots observed by robot $r$ are chosen from the robots that are located at points other than $r$'s current position and that $r$ can detect whether it is single or accompanied.
Natural relaxation of the model is to choose the $k$ robots other than $r$ (\ie robots at $r$'s current position can be chosen) and assume the weak multiplicity detection for the $k$ robots and $r$ itself. 
We call the model with the relaxation the \emph{relaxed adversarial} ($N$,$k$)-defected model.
Notice that the key property of the ($N$, $N-2$)-defected model such that any accompanied robot can observe all the robots does not hold in the relaxed model.

The following theorem shows that the gathering is impossible (from some configuration) in the relaxed adversarial ($N$,$N-2$)-defected model.

\begin{theorem}
There is no (deterministic) algorithm to solve the gathering problem in the relaxed adversarial ($N$,$N-2$)-defected model.
\end{theorem}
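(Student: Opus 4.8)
The plan is to exhibit a single configuration together with an adversarial choice of observed robots from which no deterministic algorithm can make progress. The natural candidate is the configuration that already appeared as the obstruction for $N=4$ in Section~\ref{sec:nn-alg}: arrange three robots $r_1, r_2, r_3$ at the vertices of an equilateral triangle and place the remaining $N-3$ robots all at the center $p_c$ of that triangle. In the relaxed model, an accompanied robot at $p_c$ need no longer see all the robots, so the adversary has the freedom to hand every robot at $p_c$ a view consisting of $r_1, r_2, r_3$ together with (if $N-2 > 3$, i.e.\ $N > 5$) some of the \emph{other} robots sitting at $p_c$. Crucially, with the weak multiplicity detection applied to the $k$ robots and to the observer itself, each robot at $p_c$ then sees exactly: the three triangle vertices as single occupied points, and the point $p_c$ as occupied by ``two or more'' robots (itself plus at least one chosen companion). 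So every robot at $p_c$ obtains the \emph{same} view.

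Next I would give the three corner robots views that reproduce the same symmetry obstruction as in the $(3,1)$ impossibility proof. Robot $r_i$ at a vertex should be shown the other two vertices and the multiplicity point $p_c$ — i.e.\ $r_i$ misses all of the $N-3$ center robots except it detects $p_c$ as a multiple point via its remaining budget, or, if $N-2$ is large enough, is shown several center robots (all at $p_c$, hence still just the one multiplicity point). The key is that the three corner robots' views are related to one another by the $120^\circ$ rotational symmetry of the configuration, exactly as in Figure~\ref{impos_eqtri}: since robots have no common chirality, unit distance, or orientation, each $r_i$ may fix its local frame so that all three corner robots have an identical local snapshot. Then, as in the $(3,1)$ argument, any deterministic algorithm either keeps the three corner robots forming an equilateral triangle (just possibly a smaller/rotated one, with the center robots tracking along), in which case the configuration is reproduced up to similarity and gathering never completes, or it forces them to $p_c$ — but the lack of agreed chirality lets an adversary pick the opposite handedness and send them outward to a new equilateral triangle. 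Meanwhile the center robots, all sharing one view, move as one to a single common point $p_c'$; by choosing the corner robots' motion (via the chirality ambiguity) so that $p_c'$ is \emph{not} a vertex of the new triangle and no corner robot lands on $p_c'$, we ensure the resulting configuration is again of the same type: three corners of an equilateral triangle plus a cluster of $N-3$ coincident robots somewhere not at a vertex. A small case check is needed for whether the center cluster lands on the new center or elsewhere; in either subcase one re-applies the adversarial view assignment and repeats.

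The main obstacle is the bookkeeping of the center cluster's destination relative to the new triangle: one must argue that the adversary can always keep the invariant ``three distinct non-collinear points carrying single robots, plus one point carrying $\geq 2$ robots, with all robots at that one point sharing a view and all three single robots sharing a (rotated) view,'' and in particular prevent the algorithm from sneakily achieving gathering by having a corner robot and the cluster coincide. Handling this cleanly probably requires splitting on whether the algorithm moves the corner robots to the triangle's center or not; in the former case the chirality trick relocates them to a fresh triangle while the cluster moves to the old center, which is not on the new triangle, preserving the invariant; in the latter case the triangle is merely rescaled/rotated and the cluster stays coincident, again preserving it. One also checks the degenerate possibility that the algorithm keeps all robots fixed — then gathering is not achieved and we are done immediately. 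Assembling these cases yields a never-terminating execution, contradicting correctness of $\mathcal{A}$, which proves the theorem.
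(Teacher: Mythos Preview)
Your proposal has a genuine gap: the chirality argument imported from the $(3,1)$ proof does not transfer to this configuration, because here each corner robot can identify the center $p_c$ in a chirality-independent way.

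Concretely, with your adversarial assignment each corner robot $r_i$ observes (for $N\ge 5$) both of the other two corners together with the point $p_c$, and for $N=4$ it observes both other corners and can compute their centroid. In every case the local snapshot contains enough information to pin down $p_c$ uniquely: it is the point equidistant from the three extreme points (equivalently, the observed point lying on the perpendicular bisector of the two other corners). The instruction ``move to $p_c$'' is therefore invariant under reflection of the local frame, so flipping handedness does \emph{not} send a corner robot ``outward to a new equilateral triangle''; it still goes to $p_c$. Pairing this with ``center robots stay put'' (their view---an equilateral triangle centered at themselves, possibly with multiplicity at the center---is distinguishable from a corner view) gives a deterministic rule that gathers all $N$ robots at $p_c$ in one round from your chosen configuration. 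Hence this configuration alone cannot witness impossibility, and the inductive invariant you sketch never gets off the ground. The secondary difficulty you flag---that the center cluster may move off the new triangle's center, breaking the $3$-fold symmetry---is real but moot once the primary obstruction is removed.

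The paper's proof proceeds quite differently. It does not try to exhibit a single unsolvable configuration; instead it uses \emph{two-point} configurations to force any correct algorithm to contain two specific behaviors (a single robot that sees exactly one multiply-occupied point must move there; an accompanied robot that sees exactly one other multiply-occupied point must move to the midpoint), and then shows these two forced behaviors are jointly incompatible on a third two-point configuration (two robots at one point, $N-2$ at the other), producing an infinite oscillation/convergence-without-gathering. The key idea you are missing is this forcing step: you must first constrain $\mathcal{A}$ using configurations where the adversary leaves the algorithm no choice, and only then exhibit a configuration on which those constraints conflict.
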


\begin{proof}
Let $\mathcal{A}$ be a gathering algorithm in the relaxed adversarial ($N$,$N-2$)-defected model.
We consider only initial configurations where all robots are located at two points $p_1$ and $p_2$.

First, consider the initial configuration where $N-1$ robots are located at $p_1$ and one robot, say $r_1$, is located at $p_2$.  When the robots at $p_1$ do not observe $r_1$, they misunderstand that the gathering is already achieved and terminate.  To achieve the gathering, $r_1$ has to move to $p_1$.  
This implies that $\mathcal{A}$ has the following action ({\bf Action 1}):
when a single robot $r$ observes only one occupied point other than $r$'s current point and recognizes that the point is occupied by multiple robots, $r$ has to move to the point. 

Notice that {\bf Action 1} is sufficient to show that $\mathcal{A}$ cannot solve the gathering in the relaxed adversarial (4,2)-defected model.  Consider the initial configuration where two robots exist at both of $p_1$ and $p_2$ (four robots in total).  When the robots at $p_1$ (resp. $p_2$) observe only the two robots at $p_2$ (resp. $p_1$), the robots at $p_1$ (resp. $p_2$) move to $p_2$ (resp. $p_1$) by {\bf Action 1}.
At the resultant configuration, two robots exist at both of $p_1$ and $p_2$, which shows by repeating the argument that algorithm $\mathcal{A}$ cannot solve the gathering problem.

Second, consider the initial configuration where $N\ge 5$ and all robots recognize that both $p_1$ and $p_2$ are occupied by multiple robots, which can occur when a point is occupied by three or more robots and the other is occupied by two or more robots.
When all the robots at the same point observe the same set of robots (but still they recognize that both the points are occupied by multiple robots), the robots at the same point execute the same action (\ie move to the same point).  
Since algorithm $\mathcal{A}$ solves the gathering problem, all robots eventually have to move to the same point (precisely the midpoint of the two points occupied by robots) to achieve the gathering.
This implies that $\mathcal{A}$ has the following action ({\bf Action 2}):
when an accompanied robot $r$ observes only one occupied point other than $r$'s current point and recognizes that the point is occupied by multiple robots, $r$ has to move to the midpoint of the two points. 

Finally, consider the initial configuration of $N\ (\ge 5)$ robots where two robots exist at $p_1$ and $N-2$ robots exist at $p_2$.
When each robot $r_1$ at $p_1$ observes only $N-2$ robots at $p_2$ (and recognizes itself as a single robot), $r_1$ moves to $p_2$ by {\bf Action 1}.
On the other hand, when each robot $r_2$ at $p_2$ observes the two robots at $p_1$ and $N-4$ robots (other than $r_2$) at $p_2$, $r_2$ moves to the midpoint of $p_1$ and $p_2$ by {\bf Action 2}.
At the resultant configuration, two robots exist at $p_2$ and $N-2$ robots exist at the midpoint of $p_1$ and $p_2$.
By repeating the argument, we can show that algorithm $\mathcal{A}$ cannot solve the gathering problem although all the robots converges at the same point (\ie the distance between the two groups of robots becomes smaller and smaller but does not become zero).

Consequently, there is no gathering algorithm in the relaxed adversarial ($N$,$N-2$)-defected model.
\end{proof}

\section{Conclusion and Open Problems}
\label{sec:conclusion}
In this paper, we introduced a new computational model, the ($N$, $k$)-\emph{defected model}, 
where each robot cannot necessarily observe all other robots:
\ie each robot observes at most $k$ other robots not located at its current position (where $k < N-1$).
We addressed the gathering problem, which is one of the basic problem in autonomous mobile robot systems, 
in the ($N$, $N-2$)-defected model.
We proposed two gathering algorithms:
(1) an algorithm in the adversarial ($N$,$N-2$)-defected model that achieves the gathering in three rounds,
and (2) an algorithm in the distance-based (4,2)-defected model that achieves the gathering in four rounds.
Moreover, we showed that there is no (deterministic) algorithm in either the adversarial or distance-based (3,1)-defected model.
In the proposed model, 
we assume that each robot $r$ observes $k$ other robots among the robots located at the different points 
than the point occupied by $r$ itself.
The relaxation of this assumption, where $k$ robots are chosen among all other robots other than $r$,
can be considered, however, we proved that the gathering is unsolvable in this relaxed model.

The remaining problem we are most interested in is to clarify the solvability of the gathering problem 
in the adversarial (4,2)-defected model.
Remind that the basic strategy of the proposed algorithm in the distance-based (4,2)-defected model is 
to determine one unique point from the triangle formed by the observed set of points.
We call the algorithm using this strategy the \emph{set-based algorithm},
where each robot determines the destination referring to only the set of observed points:
for example, when a robot observes an isosceles triangle, 
it always moves to the midpoint of the base, whether it is adjacent to the base or not,
\ie we do not use the information of the (relative) position of the observing robot.
We can prove that \emph{there is no (deterministic) set-based algorithm to solve the gathering problem in adversarial (4,2)-defected model}.
This means that if there is a gathering algorithm in the adversarial (4,2)-defected model,
each robot uses its relative position in the set of observed points,
\eg when a robot observes an isosceles triangle, the destination point changes 
whether the robot is at a point incident to the base of the triangle or not.

An important future work is to find the minimum $k$ that allows a solution for the gathering problem in the adversarial or distance-based ($N$, $k$)-defected model.
In this paper, we considered only the gathering problem, 
thus to challenge other problems under the ($N$, $k$)-defected model is another future work.



\bibliography{nkgathering-bib-short}

\end{document}